\theoremstyle{plain}% Theorem-like structures provided by amsthm.sty
\newtheorem{theorem}{Theorem}[section]
\newtheorem{lemma}[theorem]{Lemma}
\newtheorem{proposition}[theorem]{Proposition}
\theoremstyle{definition}
\theoremstyle{remark}
\newtheorem{remark}{Remark}
\journal{Journal of \LaTeX\ Templates}
\begin{document}

\begin{frontmatter}

\title{$\mathcal{H}_2$ Model Order Reduction: A Relative Error Setting}

%% Group authors per affiliation:
\author[mymainaddress]{Umair~Zulfiqar}

\author[mymainaddress,mysecondaryaddress]{Xin~Du\corref{mycorrespondingauthor}}
\cortext[mycorrespondingauthor]{Corresponding author}
\ead{duxin@shu.edu.com}

\author[mymainaddress]{Qiuyan~Song}

\author[ml]{Muwahida~Liaquat}

\author[vs]{Victor~Sreeram}

\address[mymainaddress]{School of Mechatronic Engineering and Automation, Shanghai University, Shanghai, 200444, China}
\address[mysecondaryaddress]{Shanghai Key Laboratory of Power Station Automation Technology, Shanghai University, Shanghai, 200444, China}
\address[ml]{Department of Electrical Engineering, College of Electrical and Mechanical Engineering, National University of Sciences and Technology, Islamabad, 44000, Pakistan}
\address[vs]{Department of Electrical, Electronic, and Computer Engineering, The University of Western Australia, Perth, 6009, Australia}

\begin{abstract}
In dynamical system theory, the process of obtaining a reduced-order approximation of the high-order model is called model order reduction. The closeness of the reduced-order model to the original model is generally gauged by using system norms of additive or relative error system. The relative error is a superior criterion to the additive error in assessing accuracy in many applications like reduced-order controller and filter designs. In this paper, we propose an oblique projection algorithm that minimizes the $\mathcal{H}_2$ norm of the relative error transfer function. The selection of reduction matrices in the algorithm is motivated by the necessary conditions for local optima of the (squared) $\mathcal{H}_2$ norm of the relative error transfer function. Numerical simulation confirms that the proposed algorithm compares well in accuracy with balanced stochastic truncation while avoiding the solution of large-scale Riccati and Lyapunov equations.
\end{abstract}

\begin{keyword}
$\mathcal{H}_2$ norm\sep model order reduction\sep oblique projection\sep optimal\sep relative error
\end{keyword}

\end{frontmatter}

%\linenumbers

\section{Introduction}
In mathematical modeling of dynamical systems, the behavior of the systems is generally described by partial differential equations, which are converted to ordinary differential equations to obtain a state-space model. If the number of differential equations describing the dynamical system is high, the state-space model becomes computationally difficult to simulate and analyze. Moreover, the design procedures that take this model as an input also become a computational challenge. To overcome this situation, model order reduction (MOR) is used to obtain a reduced-order model (ROM) that can act as a surrogate for the original high-order model. The ROM closely mimics the original model but is described by a handful of ordinary differential equations. Thus, it is cheaper to simulate and analyze, and it can serve as a surrogate for the original model in the design procedures with tolerable approximation error; cf. \cite{schilders2008model,quarteroni2014reduced,benner2011model,benner2017model,chow2013power}.

The approximation accuracy in MOR can be expressed in various forms and can be quantified by various norms, which lead to various classes of MOR. The most common expression for the closeness of the ROM to the original system is the additive error transfer function, which comprises the bulk of the MOR algorithms available in the literature. A less common yet important expression for the closeness of the ROM to the original system is the relative error transfer function. In control engineering and filter design, this error expression arises quite frequently because of its connection with Bode diagram error in the frequency domain, cf. \cite{obinata2012model}.

Let us denote the transfer functions of the original model and the ROM as $H(s)$ and $\bar{H}_r(s)$, respectively. Then three possible ways to express the MOR problem are the following:
\begin{align}
H(s)&=\bar{H}_r(s)+\Delta_{add}(s),\label{p1}\\
H(s)&=\bar{H}_r(s)\big(I+\Delta_{mul}(s)\big),\label{p2}\\
\bar{H}_r(s)&=H(s)\big(I-\Delta_{rel}(s)\big).\label{p3}
\end{align}
In additive error MOR, $||\Delta_{add}(s)||$ is minimized, whereas, in relative error MOR, $||\Delta_{mul}(s)||$ or $||\Delta_{rel}(s)||$ is minimized. Minimizing $||\Delta_{rel}(s)||$ is closely related to minimizing $||\Delta_{mul}(s)||$, and both can generally be achieved simultaneously with the same MOR algorithm; see \cite{zhou1996robust} for more details on this. The problems (\ref{p2}) and (\ref{p3}) can equivalently be represented as the following
\begin{align}
H(s)&=\big(I+\Delta_{mul}(s)\big)\bar{H}_r(s),\nonumber\\
\bar{H}_r(s)&=\big(I-\Delta_{rel}(s)\big)H(s).\nonumber
\end{align} Throughout this paper, the former representation is used to describe the problems (\ref{p2}) and (\ref{p3}).

Let us discuss the main motivation for considering relative error as a reduction criterion. If a high-order plant is reduced aiming to design a reduced-order controller for the reduced-order plant, $||\Delta_{mul}(s)||$ is a more meaningful and theoretically sound criterion for ensuring robust closed-loop stability. Let $K(s)$ be a controller that stabilizes the reduced-order plant $\bar{H}_r(s)$. To ensure that $K(s)$ also stabilizes $H(s)$, $\bar{H}_r(s)$ should be obtained with a MOR method with the following reduction criterion
\begin{align}
\underset{\substack{\bar{H}_r(s)\\\textnormal{order}=r}}{\text{min}}||[I+K(s)\bar{H}_r(s)]^{-1}K(s)\bar{H}_r(s)\Delta_{mul}(s)||,
\end{align} wherein $\Delta_{mul}(s)=\bar{H}_r^{-1}(s)\Delta_{add}(s)$; cf. \cite{obinata2012model,zhou1996robust,wang1992multiplicative,ennth,vidyasagar1985control}. As shown in \cite{ennth}, an equivalent representation of this reduction criterion can be written as
\begin{align}
\underset{\substack{\bar{H}_r(s)\\\textnormal{order}=r}}{\text{min}}||[\Delta_{mul}(s)]\bar{H}_r(s)K(s)[I+\bar{H}_r(s)K(s)]^{-1}||,
\end{align} wherein $\Delta_{mul}(s)=\Delta_{add}(s)\bar{H}_r^{-1}(s)$. Unfortunately, $K(s)$ and $\bar{H}_r(s)$ in this reduction criterion are not known before reducing the high-order plant $H(s)$ and designing the controller for $\bar{H}_r(s)$. For a well-designed control system, it is reasonable to assume that $\bar{H}_r(s)K(s)[I+\bar{H}_r(s)K(s)]^{-1}$ or $[I+K(s)\bar{H}_r(s)]^{-1}K(s)\bar{H}_r(s)$ is approximately $I$ over the operating bandwidth of the system and $0$ outside the bandwidth. This condition can be achieved approximately by a good robust control design for the reduced-order plant. This point will be highlighted further in the numerical examples section. Under this condition, the problem under consideration reduces to the relative error MOR problem. If $||\Delta_{mul}(s)||$ is small within the operating bandwidth, $K(s)$ will provide good closed-loop stability with $H(s)$ since error outside this frequency region is attenuated by the small frequency response of $\bar{H}_r(s)K(s)[I+\bar{H}_r(s)K(s)]^{-1}$ or $[I+K(s)\bar{H}_r(s)]^{-1}K(s)\bar{H}_r(s)$. Since the operating frequency is not always known beforehand in the absence of a controller, $\underset{\substack{\bar{H}_r(s)\\\textnormal{order}=r}}{\text{min}}||\Delta_{mul}(s)||$ is an effective reduction criterion for the problem under consideration. Moreover, the problem of obtaining a reduced-order infinite impulse response (IIR) filter from the high-order finite impulse response (FIR) filter is also equivalent to solving the problem (\ref{p2}) or (\ref{p3}); cf. \cite{obinata2012model}.

Balanced truncation (BT) is a well-known MOR algorithm that ensures that $||\Delta_{add}(s)||_{\mathcal{H}_\infty}$ is small \cite{moore1981principal}. An \textit{apriori} upper bound on $||\Delta_{add}(s)||_{\mathcal{H}_\infty}$ also holds for the ROM \cite{enns1984model}. However, BT is a computationally expensive algorithm, and its applicability is only viable for the order of a few hundred. By replacing the Lyapunov equations with their low-rank approximations, the computational cost of BT can be reduced, and its applicability can be extended to large-scale systems \cite{gugercin2003modified}. In \cite{enns1984model,ennth}, BT is extended to frequency-weighted BT (FWBT) to ensure that $||W(s)\Delta_{add}(s)V(s)||_{\mathcal{H}_\infty}$ is small, wherein $W(s)$ and $V(s)$ are frequency weights to emphasize the frequency region within which superior accuracy is required. FWBT can be used to solve problems (\ref{p2}) and (\ref{p3}) when $H(s)$ is a stable minimum phase system \cite{zhou1995frequency}. \textit{Apriori} upper bounds on $||\Delta_{mul}(s)||_{\mathcal{H}_\infty}$ and $||\Delta_{rel}(s)||_{\mathcal{H}_\infty}$ also hold in this scenario \cite{zhou1995frequency}. When $H(s)$ is not a minimum phase system, problems (\ref{p2}) and (\ref{p3}) can be solved using a variant of BT called balanced stochastic truncation (BST) \cite{green1988balanced}. Similar \textit{apriori} upper bounds on $||\Delta_{mul}(s)||_{\mathcal{H}_\infty}$ and $||\Delta_{rel}(s)||_{\mathcal{H}_\infty}$ hold in this scenario as well \cite{green1988relative}. Moreover, BST guarantees that $\bar{H}_r(s)$ is a minimum phase system. BST is even more computationally expensive than BT. The Riccati and Lyapunov equations in BST can be replaced with their low-rank approximations to extend its applicability to large-scale systems; see \cite{benner2001efficient}. BST is generalized for the frequency-weighted scenario in \cite{kim1995multiplicative} to minimize $||\Delta_{rel}(s)V(s)||_{\mathcal{H}_\infty}$ for obtaining a reduced-order controller. Some other generalizations of BST are also reported, like \cite{shaker2008frequency} and \cite{tahavori2013model}.

Several performance criteria in control theory and signal processing are quantified by the $\mathcal{H}_\infty$ norm. However, in a large-scale setting, the computation of the $\mathcal{H}_\infty$ norm is expensive, and generally, the MOR algorithms seeking to reduce the $\mathcal{H}_\infty$ norm of the error system are also expensive \cite{castagnotto2018optimal}. From a computational standpoint, the $\mathcal{H}_2$ norm is a better option as it is easy to compute due to its relationship with system gramians, for which a low-rank solution is generally available. The optimal solution in the $\mathcal{H}_2$ norm, i.e., local optimum of $||\Delta_{add}(s)||_{\mathcal{H}_2}^2$, can be obtained efficiently even in a large-scale setting \cite{xu2011optimal,bunse2010h2}. In contrast, the optimal solution in the $\mathcal{H}_\infty$ norm is computationally expensive, even in a small-scale setting \cite{castagnotto2017interpolatory}. From a system theory perspective, reducing the $\mathcal{H}_2$ or $\mathcal{H}_\infty$ norm of $\Delta_{add}(s)$ essentially means ensuring that the output response of $\Delta_{add}(s)$ is small for any input. Therefore, the MOR algorithms that are optimized to reduce $||\Delta_{add}(s)||_{\mathcal{H}_2}$ inevitably also reduce $||\Delta_{add}(s)||_{\mathcal{H}_\infty}$. This makes the $\mathcal{H}_2$ norm an attractive option for performance specification in MOR due to its favorable numerical properties \cite{castagnotto2017interpolatory}.

In \cite{gugercin2008h_2}, an iterative Rational Krylov algorithm (IRKA) is presented for single-input single-output (SISO) systems to construct a local optimum of $||\Delta_{add}$ $(s)||_{\mathcal{H}_2}^2$. IRKA is extended to MIMO systems in \cite{van2008h2}. A more robust and efficient approach for constructing a local optimum of $||\Delta_{add}(s)||_{\mathcal{H}_2}^2$ using Sylvester equations is presented in \cite{xu2011optimal,benner2011sparse}. The convergence in IRKA is not guaranteed. Some trust region-based approaches to speed up convergence are reported in the literature like \cite{beattie2009trust,sato2015riemannian}. Some suboptimal solutions to the $\mathcal{H}_2$-optimal MOR problem are also reported in \cite{wolf2013,ibrir2018projection}, which preserves some other system properties like stability. The algorithm presented in \cite{benner2011sparse} is generalized for the frequency weighted scenario in \cite{zulfiqar2021frequency}, which constructs a near-optimum of $||W(s)\Delta_{add}(s)V(s)||_{\mathcal{H}_2}^2$. Similarly, IRKA is generalized for the frequency weighted scenario in \cite{zulfiqar2021}, which also constructs a near-optimum of $||W(s)\Delta_{add}(s)V(s)||_{\mathcal{H}_2}^2$. To the best of the author's knowledge, there is no MOR algorithm available in the literature that seeks to minimize $||\Delta_{mul}(s)||_{\mathcal{H}_2}$ and $||\Delta_{rel}(s)||_{\mathcal{H}_2}$, which has motivated the results of this paper.

In this paper, it is first demonstrated that the iterative frequency-weighted $\mathcal{H}_2$ MOR algorithm (IFWHMORA) proposed in \cite{zulfiqar2021frequency} can be used for solving the problem (\ref{p3}), and a near-optimum of $||\Delta_{rel}(s)||_{\mathcal{H}_2}^2$ can be constructed when $H(s)$ is a stable minimum phase system. The case when $H(s)$ is not a minimum phase system is also discussed. Then necessary conditions for the local optimum of $||\Delta_{mul}(s)||_{\mathcal{H}_2}^2$ are derived when $\bar{H}_r(s)$ is a minimum phase system. The case when $\bar{H}_r(s)$ is not a minimum phase system is also discussed, and an oblique projection algorithm, inspired by the necessary conditions of the local optimum, is proposed. The proposed algorithm is tested on three benchmark dynamical systems. The numerical simulation confirms that the proposed algorithm compares well in accuracy with BST while avoiding the solution of large-scale Lyapunov and Riccati equations.
\section{Preliminaries}
In this section, the problem statement is formulated, and two important MOR techniques in the existing literature for this problem are briefly reviewed.
\subsection{Problem Setting}
Consider an $n^{th}$-order $m\times m$ square linear time-invariant (LTI) system $H(s)$. Let $(A,B,C,D)$ be its state-space realization, which is related to $H(s)$ as
\begin{align}
H(s)=C(sI-A)^{-1}B+D\nonumber
\end{align} where $A\in\mathbb{R}^{n\times n}$, $B\in\mathbb{R}^{n\times m}$, $C\in\mathbb{R}^{m\times n}$, and $D\in\mathbb{R}^{m\times m}$.

Let us denote the $r^{th}$-order ($r\ll n$) approximation of $H(s)$ as $\bar{H}_r(s)$, which is also an $m\times m$ square linear time-invariant (LTI) system. Let $(\bar{A}_r,\bar{B}_r,\bar{C}_r,D)$ be its state-space realization, which is related to $\bar{H}_r(s)$ as
\begin{align}
\bar{H}_r(s)=\bar{C}_r(sI-\bar{A}_r)^{-1}\bar{B}_r+D\nonumber
\end{align} where $\bar{A}_r\in\mathbb{R}^{r\times r}$, $\bar{B}_r\in\mathbb{R}^{r\times m}$, and $\bar{C}_r\in\mathbb{R}^{m\times n}$.

Let the matrices $\bar{V}_r\in\mathbb{R}^{n\times r}$ and $\bar{W}_r\in\mathbb{R}^{n\times r}$ be such that $\bar{W}_r^T\bar{V}_r=I$, and the columns of $\bar{V}_r$ span an $r$-dimensional subspace along with the kernels of $\bar{W}_r^T$. If $\bar{H}_r(s)$ is constructed via oblique projection, the following holds
\begin{align}
\bar{A}_r&=\bar{W}_r^TA\bar{V}_r,&\bar{B}_r&=\bar{W}_r^TB,&\bar{C}_r&=C\bar{V}_r\nonumber
\end{align} where $\Pi=\bar{V}_r\bar{W}_r^T$ is an oblique projector.

The $\mathcal{H}_2$ relative-error MOR is to construct $\bar{H}_r(s)$, such that $||\Delta_{rel}(s)||_{\mathcal{H}_2}$ or $||\Delta_{mul}(s)||_{\mathcal{H}_2}$ is small, i.e.,
\begin{align}
\underset{\substack{\bar{H}_r(s)\\\textnormal{order}=r}}{\text{min}}||\Delta_{rel}(s)||_{\mathcal{H}_2}&&\textnormal{and/or}&&\underset{\substack{\bar{H}_r(s)\\\textnormal{order}=r}}{\text{min}}||\Delta_{mul}(s)||_{\mathcal{H}_2}.\nonumber
\end{align}If $\bar{H}_r(s)$ is obtained via oblique projection, the problem becomes that of constructing the reduction matrices $\bar{V}_r$ and $\bar{W}_r$, such that $||\Delta_{rel}(s)||_{\mathcal{H}_2}$ or $||\Delta_{mul}(s)||_{\mathcal{H}_2}$ is small. For clarity, $||\Delta_{rel}(s)||_{\mathcal{H}_2}$ and $||\Delta_{mul}(s)||_{\mathcal{H}_2}$ are referred to as relative error I and relative error II, respectively, in the remainder of the paper.
\subsection{Literature Review}
In this subsection, two important algorithms for relative-error MOR in the literature are reviewed. The first one minimizes $||\Delta_{rel}(s)||_{\mathcal{H}_\infty}$ and $||\Delta_{mul}(s)||_{\mathcal{H}_\infty}$, while the second one can minimize $||\Delta_{rel}(s)||_{\mathcal{H}_2}$.
\subsubsection{Balanced Stochastic Truncation (BST)}
Let $G(s)$ be a minimum phase right spectral factor of $\Phi(s)=H(s)H^*(s)$. The controllability gramian $P_{11}$ of the pair $(A,B)$ solves the following Lyapunov equation
\begin{align}
AP_{11}+P_{11}A^T+BB^T=0.\label{nnee6}
\end{align} Let us define $B_x$ and $A_x$ as the following
\begin{align}
B_x&=P_{11}C^T+BD^T,&A_x&=A-B_x(DD^T)^{-1}C.\nonumber
\end{align}
Now, let $X_\phi$ be the solution of the following Riccati equation
\begin{align}
A_x^TX_\phi+X_\phi A_x+X_\phi B_x(DD^T)^{-1}B_x^TX_\phi+C^T(DD^T)^{-1}C=0.\label{nnee6666}
\end{align}
The reduction matrices in BST are computed such that $\bar{W}_r^TP_{11}\bar{W}_r=\bar{V}_r^TX_\phi\bar{V}_r=diag(\sigma_1,\cdots,\sigma_r)$ where $\sigma_i$ are $r$ biggest Hankel singular values of $G^{-*}(s)H(s)$.
\begin{remark}
When $H(s)$ is a minimum phase system, and the matrix $D$ has full rank, a stable state-space realization of $H^{-1}(s)$ exists, and BST reduces to FWBT by selecting the frequency weight as $W(s)=H^{-1}(s)$; see \cite{zhou1995frequency}.
\end{remark}
\subsubsection{Iterative Frequency-weighted $\mathcal{H}_2$ MOR Algorithm (IFWHMORA)}
Let $(A_i,B_i,C_i,D_i)$ be a stable state-space realization of the frequency weight $W(s)$. Let $P_{12}$, $Q_{12}$, $Q_{13}$, $Q_{23}$, and $Q_{i}$ satisfy the following linear matrix equations
\begin{align}
AP_{12}+P_{12}\bar{A}_r^T+B\bar{B}_r^T&=0,\hspace*{2cm}\label{nnee7}\\
A^TQ_{12}+Q_{12}\bar{A}_r-\bar{C}_r^TB_i^TQ_{23}^T-Q_{13}B_i\bar{C}_r-C^TD_i^TD_i\bar{C}_r&=0,\label{supn1}\\
A^TQ_{13}+Q_{13}A_i+C^T(B_i^TQ_i+D_i^TC_i)&=0,\label{supn2}\\
\bar{A}_r^TQ_{23}+Q_{23}A_i-\bar{C}_r^T(B_i^TQ_i+D_i^TC_i)&=0,\label{supn3}\\
A_i^TQ_i+Q_iA_i+C_i^TC_i&=0.\label{supn4}
\end{align}
In IFWHMORA, starting with an arbitrary guess of the ROM, the reduction matrices are selected as $\bar{V}_r=P_{12}$ and $\bar{W}_r=Q_{12}$. The oblique projection condition $\bar{W}_r^T\bar{V}_r=I$ is enforced by using bi-orthogonal Gram-Schmidt method. The process is repeated until convergence upon which a near-optimum of $||W(s)\Delta_{add}(s)||_{\mathcal{H}_2}^2$ is obtained.
\begin{remark}
Although not noted in \cite{zulfiqar2021frequency}, IFWHMORA can be used to obtain a near optimum of $||\Delta_{rel}(s)||_{\mathcal{H}_2}^2$, which will be shown in the next section.
\end{remark}
\section{MOR based on Relative Error Criterion I}
Let us assume that $H(s)$ is a minimum phase system, and the matrix $D$ is full rank. Then a possible state-space realization $(A_i,B_i,C_i,D_i)$ for $W(s)=H^{-1}(s)$ is given as
\begin{align}
A_i&=A-BD^{-1}C,& B_i&=-BD^{-1}, & C_i&=D^{-1}C,& D_i&=D^{-1},\label{neq4}
\end{align}cf. \cite{zhou1996robust}. Further, it can readily be noted that $\Delta_{rel}(s)$ becomes equal to $H^{-1}(s)\Delta_{add}(s)$. Then, by using IFWHMORA, a near-optimum of $H^{-1}(s)\Delta_{add}(s)$ can be obtained.

The state-space realization of $H^{-1}(s)$ given in (\ref{neq4}) only exists when $D$ is invertible. Thus, when $D$ is rank deficient, IFWHMORA cannot be used without mathematical manipulation. Moreover, when $H(s)$ has zeros in the right half of the $s$-plane, i.e., $H(s)$ is not a minimum phase system, $H^{-1}(s)$ becomes unstable, and $||\Delta_{rel}(s)||_{\mathcal{H}_2}$ becomes unbounded. This also limits the applicability of IFWHMORA. We now discuss remedies for these two issues in the sequel.

Let $G(s)$ be a minimum phase right spectral factor of $H(s)H^*(s)$, such that $G^*(s)G(s)=H(s)H^*(s)$. From the definition of the $\mathcal{H}_2$ norm, $||\Delta_{rel}(s)||_{\mathcal{H}_2}$ can be defined as
  \begin{align}
  ||\Delta_{rel}(s)||_{\mathcal{H}_2}
  &=\sqrt{\frac{1}{2\pi}\int_{-\infty}^{\infty}trace\Big(\Delta_{add}^*(j\omega)[H(j\omega)H^{*}(j\omega)]^{-1}\Delta_{add}(j\omega)\Big)d\omega}\nonumber\\
  &=\sqrt{\frac{1}{2\pi}\int_{-\infty}^{\infty}trace\Big(\Delta_{add}^*(j\omega)G^{-1}(j\omega)G^{-*}(j\omega)\Delta_{add}(j\omega)\Big)d\omega}.\label{neq5}
  \end{align}
It can readily be noted from (\ref{neq5}) that by selecting the frequency weight as $W(s)=G^{-*}(s)$, IFWHMORA can be used even if $H(s)$ is not a minimum phase system. Next, we formulate a possible state-space realization of $G^{-*}(s)$ on similar lines to BST. A possible realization of $G(s)$ is computed in BST as
\begin{align}
A_{sp}&=A,&B_{sp}&=B_x,&C_{sp}&=D^{-1}(C-B_x^TX_{\phi}),&D_{sp}&=D^T.\nonumber
\end{align}
  Then a possible minimum phase realization of $G^*(s)$ is given as
  \begin{align}
  A_g&=-A_{sp}^T,&B_g&=-C_{sp}^T,&C_g&=B_{sp}^T,&D_g=D_{sp}^T,\nonumber
  \end{align} cf. \cite{zhou1996robust}. Further, a possible stable realization of $G^{-*}(s)$ is given as
  \begin{align}
A_{g_i}&=A_g-B_gD_g^{-1}C_g,& B_{g_i}&=-B_gD_g^{-1}, & C_{g_i}&=D_g^{-1}C_g,& D_{g_i}&=D_g^{-1}.\nonumber
\end{align}
If the matrix $D$ is not full rank, its invertibility becomes an issue. This can be tackled by the so-called $\epsilon$-regularization, i.e., by replacing the original $D$ matrix with an artificial $D=\epsilon I$ in the algorithm wherein $\epsilon$ is a small valued scalar. In the final ROM, the $D$-matrix is set to the original $D$-matrix. This approach is an effective one, and it has been used in BST to deal with a similar issue; for details, see \cite{green1988balanced,benner2001efficient} and also the documentation of MATLAB's built-in function for BST \cite{bstmr}.

A major drawback of using IFWHMORA is that the equations (\ref{supn1})-(\ref{supn4}), (\ref{nnee6}), and (\ref{nnee6666}) are expensive to solve in a large-scale setting. The number of large-scale linear matrix equations in IFWMORA is even more than that in BST. The main motivation for developing an algorithm, which tends to minimize $||\Delta_{rel}(s)||_{\mathcal{H}_2}$ instead of $||\Delta_{rel}(s)||_{\mathcal{H}_\infty}$, was to avoid large-scale Riccati and Lyapunov equations involved in BST. Even when $H(s)$ is a minimum phase system, the frequency weight $W(s)=H^{-1}(s)$ remains a large-scale system, which seriously affects the computational efficiency of IFWHMORA because the equations (\ref{supn1})-(\ref{supn4}) remain expensive to solve. To conclude, IFWHMORA is not a computationally viable option for constructing a near-optimum (local) of $||\Delta_{rel}(s)||_{\mathcal{H}_2}^2$ in a large-scale setting.
\section{MOR based on Relative Error Criterion II}
If $\bar{H}_r(s)$ is invertible, $\Delta_{mul}(s)$ can be expressed as $\bar{H}_r^{-1}(s)\Delta_{add}(s)$. Thus, the relative error II can also be seen as a frequency-weighted additive error criterion. The theory developed for IFWHMORA in \cite{zulfiqar2021frequency} assumed that $W(s)$ does not depend on $\bar{H}_r(s)$. Here, however, the frequency-weight depends on the ROM, which is not known \textit{apriori}. Thus, IFWHMORA cannot be used to obtain a near-optimum of $||\Delta_{mul}(s)||_{\mathcal{H}_2}^2$. In the sequel, we derive the necessary conditions for the local optimum of $||\Delta_{mul}(s)||_{\mathcal{H}_2}^2$ and note that these are different from the conditions derived in \cite{petersson2013nonlinear} for $||W(s)\Delta_{add}(s)||_{\mathcal{H}_2}^2$ when $W(s)$ is replaced by $\bar{H}^{-1}_r(s)$.
\subsection{State-space Realization of $\Delta_{mul}(s)$ and $||\Delta_{mul}(s)||_{\mathcal{H}_2}$}
When $\bar{H}_r(s)$ is a stable minimum phase, and the matrix $D$ is full rank, a possible state-space realization of $\Delta_{mul}(s)=\bar{H}_r^{-1}(s)\Delta_{add}(s)$ is given as
\begin{align}
A_{mul}&=\begin{bmatrix}A&0&0\\0&\bar{A}_r&0\\-\bar{B}_rD^{-1}C& \bar{B}_rD^{-1}\bar{C}_r&\bar{A}_r-\bar{B}_rD^{-1}\bar{C}_r\end{bmatrix},& B_{mul}&=\begin{bmatrix}B\\\bar{B}_r\\0\end{bmatrix},\nonumber\\
C_{mul}&=\begin{bmatrix}D^{-1}C&-D^{-1}\bar{C}_r&D^{-1}\bar{C}_r\end{bmatrix};\nonumber
\end{align}cf. \cite{zhou1996robust}.
Let $P_{mul}=\begin{bmatrix}P_{11}&P_{12}&\hat{P}_{13}\\P_{12}^T&\hat{P}_{22}&\hat{P}_{23}\\\hat{P}_{13}^T&\hat{P}_{23}^T&\hat{P}_{33}\end{bmatrix}$ be the controllability gramian of the pair $(A_{mul},B_{mul})$, which solves the following Lyapunov equation
\begin{align}
A_{mul}P_{mul}+P_{mul}A_{mul}^T+B_{mul}B_{mul}^T&=0\nonumber
\end{align} wherein
\begin{align}
A\hat{P}_{13}+\hat{P}_{13}(\bar{A}_r-\bar{B}_rD^{-1}\bar{C}_r)^T-P_{11}C^TD^{-T}\bar{B}_r^T+P_{12}\bar{C}_r^TD^{-T}\bar{B}_r^T&=0,\label{v2.21}\\
\bar{A}_r\hat{P}_{22}+\hat{P}_{22}\bar{A}_r^T+\bar{B}_r\bar{B}_r^T&=0,\label{nnee8}\\
\bar{A}_r\hat{P}_{23}+\hat{P}_{23}(\bar{A}_r-\bar{B}_rD^{-1}\bar{C}_r)^T-P_{12}^TC^TD^{-T}\bar{B}_r^T+P_{22}\bar{C}_r^TD^{-T}\bar{B}_r^T&=0,\label{v2.22}\\
(\bar{A}_r-\bar{B}_rD^{-1}\bar{C}_r)\hat{P}_{33}+\hat{P}_{33}(\bar{A}_r-\bar{B}_rD^{-1}\bar{C}_r)^T-\bar{B}_rD^{-1}C\hat{P}_{13}&\nonumber\\
+\bar{B}_rD^{-1}\bar{C}_r\hat{P}_{23}-\hat{P}_{13}^TC^TD^{-T}\bar{B}_r^T+\hat{P}_{23}^T\bar{C}_r^TD^{-T}\bar{B}_r^T&=0.\label{v2.23}
\end{align}
Let $Q_{mul}=\begin{bmatrix}\hat{Q}_{11}&\hat{Q}_{12}&\hat{Q}_{13}\\\hat{Q}_{12}^T&\hat{Q}_{22}&\hat{Q}_{23}\\\hat{Q}_{13}^T&\hat{Q}_{23}&\hat{Q}_{33}\end{bmatrix}$ be the observability gramian of the pair $(A_{mul},C_{mul})$, which solves the following Lyapunov equation
\begin{align}
A_{mul}^TQ_{mul}+Q_{mul}A_{mul}+C_{mul}^TC_{mul}&=0\nonumber
\end{align} wherein
\begin{align}
A^T\hat{Q}_{11}+\hat{Q}_{11}A-C^TD^{-T}\bar{B}_r^T\hat{Q}_{13}^T-\hat{Q}_{13}\bar{B}_rD^{-1}C+C^TD^{-T}D^{-1}C&=0,\label{v2.25}\\
A^T\hat{Q}_{12}+\hat{Q}_{12}\bar{A}_r-C^TD^{-T}\bar{B}_r^T\hat{Q}_{23}+\hat{Q}_{13}\bar{B}_rD^{-1}\bar{C}_r-C^TD^{-T}D^{-1}\bar{C}_r&=0,\label{v2.26}\\
A^T\hat{Q}_{13}+\hat{Q}_{13}(\bar{A}_r-\bar{B}_rD^{-1}\bar{C}_r)-C^TD^{-T}\bar{B}_r^T\hat{Q}_{33}+C^TD^{-T}D^{-1}\bar{C}_r&=0,\label{v2.27}\\
\bar{A}_r^T\hat{Q}_{22}+\hat{Q}_{22}\bar{A}_r+\bar{C}_r^TD^{-T}\bar{B}_r^T\hat{Q}_{23}+\hat{Q}_{23}\bar{B}_rD^{-1}\bar{C}_r+\bar{C}_r^TD^{-T}D^{-1}\bar{C}_r&=0,\label{v2.28}\\
\bar{A}_r^T\hat{Q}_{23}+\hat{Q}_{23}(\bar{A}_r-\bar{B}_rD^{-1}\bar{C}_r)+\bar{C}_r^TD^{-T}\bar{B}_r^T\hat{Q}_{33}-\bar{C}_r^TD^{-T}D^{-1}\bar{C}_r&=0,\label{v2.29}\\
(\bar{A}_r-\bar{B}_rD^{-1}\bar{C}_r)^T\hat{Q}_{33}+\hat{Q}_{33}(\bar{A}_r-\bar{B}_rD^{-1}\bar{C}_r)+\bar{C}_r^TD^{-T}D^{-1}\bar{C}_r&=0.\label{v2.30}
\end{align}
\begin{proposition}
The matrices $\hat{Q}_{12}$ and $\hat{Q}_{22}$ are equal to $-\hat{Q}_{13}$ and $-\hat{Q}_{23}$, respectively. Further $\hat{Q}_{33}$ is equal to $\hat{Q}_{22}$.
\end{proposition}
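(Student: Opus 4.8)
The plan is to rely on a single elementary fact about matrix equations: if $M$ and $N$ are both Hurwitz and $M^T X + X N = 0$, then $X = 0$, since every eigenvalue sum $\lambda_i(M^T)+\lambda_j(N)$ has strictly negative real part and is therefore nonzero. Here $A$ and $\bar{A}_r$ are stable by hypothesis, and I would first record that $A_z := \bar{A}_r - \bar{B}_r D^{-1}\bar{C}_r$, the state matrix of $\bar{H}_r^{-1}(s)$, is also stable because $\bar{H}_r(s)$ is assumed minimum phase (its eigenvalues are the transmission zeros of $\bar{H}_r(s)$). The single algebraic identity driving every cancellation is $\bar{A}_r^T = A_z^T + \bar{C}_r^T D^{-T}\bar{B}_r^T$, which lets cross terms telescope. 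The three claimed relations will then each follow by forming the right linear combination of the gramian equations \eqref{v2.25}--\eqref{v2.30} and reading off a homogeneous equation whose unique solution is zero.

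First I would establish $\hat{Q}_{23}=-\hat{Q}_{33}$ by adding \eqref{v2.29} and \eqref{v2.30}. The two copies of $\bar{C}_r^TD^{-T}D^{-1}\bar{C}_r$ cancel, and after substituting $\bar{A}_r^T = A_z^T + \bar{C}_r^T D^{-T}\bar{B}_r^T$ in the term $\bar{A}_r^T\hat{Q}_{23}$, the two $\bar{C}_r^T D^{-T}\bar{B}_r^T$ contributions combine so that $S := \hat{Q}_{23}+\hat{Q}_{33}$ obeys $\bar{A}_r^T S + S A_z = 0$; stability of $\bar{A}_r$ and $A_z$ gives $S=0$. Next I would substitute this relation into \eqref{v2.28} and expand $A_z$ inside \eqref{v2.30}; subtracting the latter from the former cancels the inhomogeneous term and both $\bar{B}_r D^{-1}\bar{C}_r$ cross terms, leaving $\bar{A}_r^T(\hat{Q}_{22}-\hat{Q}_{33}) + (\hat{Q}_{22}-\hat{Q}_{33})\bar{A}_r = 0$. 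Stability of $\bar{A}_r$ forces $\hat{Q}_{22}=\hat{Q}_{33}$, and together with $\hat{Q}_{23}=-\hat{Q}_{33}$ this yields $\hat{Q}_{22}=-\hat{Q}_{23}$, two of the three claims.

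For the last relation $\hat{Q}_{12}=-\hat{Q}_{13}$ I would add \eqref{v2.26} and \eqref{v2.27}. The terms $\mp C^TD^{-T}D^{-1}\bar{C}_r$ cancel, the combination $\hat{Q}_{13}A_z+\hat{Q}_{13}\bar{B}_rD^{-1}\bar{C}_r$ collapses to $\hat{Q}_{13}\bar{A}_r$, and the coupling term becomes $-C^TD^{-T}\bar{B}_r^T(\hat{Q}_{23}+\hat{Q}_{33})$, which is exactly the quantity shown to vanish in the first step. Writing $U := \hat{Q}_{12}+\hat{Q}_{13}$, what remains is the Sylvester equation $A^T U + U\bar{A}_r = 0$, whose unique solution is $U=0$ since $A$ and $\bar{A}_r$ are stable; hence $\hat{Q}_{12}=-\hat{Q}_{13}$.

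The main obstacle is organizational rather than deep: one must select precisely the linear combinations that make the many $\bar{C}_r^T D^{-T}\bar{B}_r^T$, $\bar{B}_r D^{-1}\bar{C}_r$, and $\bar{C}_r^T D^{-T}D^{-1}\bar{C}_r$ cross terms telescope, and the ordering matters because the identity for $\hat{Q}_{12}+\hat{Q}_{13}$ depends on $\hat{Q}_{23}+\hat{Q}_{33}=0$ proved earlier. The one hypothesis that must be invoked with care is the minimum-phase property of $\bar{H}_r(s)$: it is what makes $A_z$ Hurwitz and therefore guarantees that each homogeneous Sylvester/Lyapunov equation has only the trivial solution, which is the backbone of the whole argument.
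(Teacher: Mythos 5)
Your proof is correct and takes essentially the same route as the paper: form the same three combinations --- add \eqref{v2.29} and \eqref{v2.30}, compare \eqref{v2.28} with \eqref{v2.30}, add \eqref{v2.26} and \eqref{v2.27}, in that order --- and conclude each sum/difference vanishes because it satisfies a homogeneous Sylvester/Lyapunov equation with stable coefficient matrices. If anything, your bookkeeping is slightly more careful than the paper's: your intermediate equations (e.g., $\bar{A}_r^T(\hat{Q}_{23}+\hat{Q}_{33})+(\hat{Q}_{23}+\hat{Q}_{33})(\bar{A}_r-\bar{B}_rD^{-1}\bar{C}_r)=0$, a mixed Sylvester equation rather than the pure Lyapunov form the paper displays) are the algebraically exact ones, and you state explicitly both the dependence of the later steps on the previously established $\hat{Q}_{23}+\hat{Q}_{33}=0$ and the minimum-phase hypothesis that makes $\bar{A}_r-\bar{B}_rD^{-1}\bar{C}_r$ Hurwitz.
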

\begin{proof}
By adding equations (\ref{v2.29}) and (\ref{v2.30}), we get
\begin{align}
(\bar{A}_r-\bar{B}_rD^{-1}\bar{C}_r)^T(\hat{Q}_{23}+\hat{Q}_{33})+(\hat{Q}_{23}+\hat{Q}_{33})(\bar{A}_r-\bar{B}_rD^{-1}\bar{C}_r)&=0.\nonumber
\end{align}
Thus $\hat{Q}_{23}+\hat{Q}_{33}=0$ and $\hat{Q}_{33}=-\hat{Q}_{23}$. Similarly, by subtracting equation (\ref{v2.28}) from (\ref{v2.30}), we get
\begin{align}
(\bar{A}_r-\bar{B}_rD^{-1}\bar{C}_r)^T(\hat{Q}_{33}-\hat{Q}_{22})+(\hat{Q}_{33}-\hat{Q}_{22})(\bar{A}_r-\bar{B}_rD^{-1}\bar{C}_r)&=0.\nonumber
\end{align}
Thus $\hat{Q}_{33}-\hat{Q}_{22}=0$ and $\hat{Q}_{33}=\hat{Q}_{22}$. Finally, by adding equation (\ref{v2.26}) and (\ref{v2.27}), we get
\begin{align}
\bar{A}^T(\hat{Q}_{12}+\hat{Q}_{13})+(\hat{Q}_{12}+\hat{Q}_{13})\bar{A}_r&=0.\nonumber
\end{align}
Thus $\hat{Q}_{12}+\hat{Q}_{13}=0$ and $\hat{Q}_{12}=-\hat{Q}_{13}$.
\end{proof}
From the definition of $\mathcal{H}_2$ norm, $||\Delta_{mul}(s)||_{\mathcal{H}_2}$ is given as
\begin{align}
||\Delta_{mul}(s)||_{\mathcal{H}_2}&=\sqrt{\frac{1}{2\pi}\int_{-\infty}^{\infty}trace\Big(\Delta_{mul}^*(j\omega)\Delta_{mul}(j\omega)\Big)d\omega}\nonumber\\
  &=\sqrt{\frac{1}{2\pi}\int_{-\infty}^{\infty}trace\Big(\Delta_{add}^*(j\omega)\bar{H}_r^{-*}(j\omega)\bar{H}_r^{-1}(j\omega)\Delta_{add}(j\omega)\Big)d\omega}\nonumber\\
  &=\sqrt{trace(C_{mul}P_{mul}C_{mul}^T)}\nonumber\\
&=\big(trace(D^{-1}CP_{11}C^TD^{-T}-2D^{-1}CP_{12}\bar{C}_r^TD^{-T}\nonumber\\
&\hspace*{3cm}+2D^{-1}C\hat{P}_{13}\bar{C}_r^TD^{-T}+D^{-1}\bar{C}_r\hat{P}_{22}\bar{C}_r^TD^{-T}\nonumber\\
&\hspace*{3cm}-2D^{-1}\bar{C}_r\hat{P}_{23}\bar{C}_r^TD^{-T}+D^{-1}\bar{C}_r\hat{P}_{33}\bar{C}_r^TD^{-T})\big)^{\frac{1}{2}}\nonumber\\
&=\sqrt{trace(B_{mul}^TQ_{mul}B_{mul})}\nonumber\\
&=\big(trace(B^T\hat{Q}_{11}B+2B^T\hat{Q}_{12}\bar{B}_r+\bar{B}_r^T\hat{Q}_{22}\bar{B}_r)\big)^{\frac{1}{2}}.\nonumber
\end{align}
\subsection{Necessary Conditions for the Local Optimum of $||\Delta_{mul}(s)||_{\mathcal{H}_2}^2$}
In this subsection, the necessary conditions for local optimum $||\Delta_{mul}(s)||_{\mathcal{H}_2}^2$ are stated in terms of the state-space realization and the $\mathcal{H}_2$ norm expression derived in the previous subsection. The following three properties of trace \cite{petersen2008matrix} are used in the derivation of the necessary conditions:
\begin{enumerate}
  \item Additive property, i.e., $trace(R+S+T)=trace(R)+trace(S)+trace(T)$.
  \item Transpose property, i.e., $trace(R)=trace(R^T)$.
  \item Cyclic permutation property, i.e., $trace(RST)=trace(TRS)=trace(STR)$.
\end{enumerate}
\begin{lemma}\label{lemma}
Let $P$ and $Q$ satisfy the following Sylvester equations
\begin{align}
\hat{A}P+P\hat{B}+\hat{C}&=0,\nonumber\\
\hat{B}Q+Q\hat{A}+\hat{D}&=0.\nonumber
\end{align}
Then $trace(\hat{C}Q)=trace(\hat{D}P)$.
\end{lemma}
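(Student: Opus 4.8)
The plan is to eliminate $\hat{C}$ and $\hat{D}$ from the target identity by solving the two given Sylvester equations, and then to match the resulting traces term by term using only the additive and cyclic-permutation properties listed above. First I would rewrite the equations as $\hat{C}=-\hat{A}P-P\hat{B}$ and $\hat{D}=-\hat{B}Q-Q\hat{A}$, so that both $trace(\hat{C}Q)$ and $trace(\hat{D}P)$ become expressions entirely in terms of $\hat{A}$, $\hat{B}$, $P$, and $Q$.

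Substituting these and applying the additive property, I expect to obtain
\begin{align}
trace(\hat{C}Q)&=-trace(\hat{A}PQ)-trace(P\hat{B}Q),\nonumber\\
trace(\hat{D}P)&=-trace(\hat{B}QP)-trace(Q\hat{A}P).\nonumber
\end{align}
The crucial observation is that the two right-hand sides are built from the same cyclic triples: by the cyclic-permutation property, $trace(\hat{A}PQ)=trace(Q\hat{A}P)$ and $trace(P\hat{B}Q)=trace(\hat{B}QP)$. Pairing the terms accordingly makes the two expressions identical, which establishes $trace(\hat{C}Q)=trace(\hat{D}P)$.

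There is no substantial obstacle here, since the whole argument reduces to a short algebraic manipulation with traces, and it requires no spectral or stability assumptions on $\hat{A}$ and $\hat{B}$ beyond the existence of the solutions $P$ and $Q$. The only point demanding care is the bookkeeping of factor order, so that each cyclic permutation is applied to a genuinely matching product rather than to incompatible factors; I would therefore keep the products $\hat{A}PQ$ and $P\hat{B}Q$ (together with their cyclic partners $Q\hat{A}P$ and $\hat{B}QP$) explicit throughout to make the cancellation transparent.
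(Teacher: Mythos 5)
Your proof is correct. Note that the paper itself does not prove this lemma at all: its ``proof'' consists of a citation to Lemma 4.1 of the thesis by Petersson, so there is no in-paper argument to match against. Your substitution $\hat{C}=-\hat{A}P-P\hat{B}$, $\hat{D}=-\hat{B}Q-Q\hat{A}$ followed by linearity and the cyclic identities $trace(\hat{A}PQ)=trace(Q\hat{A}P)$ and $trace(P\hat{B}Q)=trace(\hat{B}QP)$ is exactly the natural self-contained argument (equivalently: right-multiply the first equation by $Q$, left-multiply the second by $P$, take traces, and subtract), and it is the same kind of manipulation that underlies the cited result. Two points in its favor are worth stating explicitly: the cyclic permutations remain valid when $P$, $Q$, $\hat{C}$, $\hat{D}$ are rectangular (as they are in the paper's application, where $P_{12}$ and $\hat{Q}_{12}$ are $n\times r$), since each product being traced is square; and, as you observe, no stability or spectral-disjointness hypotheses on $\hat{A}$ and $\hat{B}$ are needed --- only the existence of some solutions $P$ and $Q$, which is precisely the form in which the lemma is invoked repeatedly in the appendix. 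Supplying this three-line proof makes the paper's key computational tool self-contained, which is arguably an improvement over the bare citation.
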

\begin{proof}
See the proof of Lemma 4.1 in \cite{petersson2013nonlinear}.
\end{proof}
\begin{theorem}\label{th1}
The state-space realization of the local optimum of the $||\Delta_{mul}(s)||_{\mathcal{H}_2}^2$ satisfies the following conditions
\begin{align}
\hat{Q}_{12}^TP_{12}+\hat{Q}_{22}\hat{P}_{22}+\bar{X}&=0,\label{opc1}\hspace*{2cm}\\
\hat{Q}_{12}^TB+\hat{Q}_{22}\bar{B}_r+\bar{Y}&=0,\label{opc2}\\
      -(D^{T}D)^{-1}CP_{12}+(D^{T}D)^{-1}\bar{C}_r\hat{P}_{22}+\bar{Z}&=0\label{opc3}
\end{align}
wherein
\begin{align}
\bar{X}&=\hat{Q}_{13}^T\hat{P}_{13}+\hat{Q}_{23}\hat{P}_{23}^T+\hat{Q}_{23}\hat{P}_{23}+\hat{Q}_{33}\hat{P}_{33},\nonumber\\
\bar{Y}&=\big(-\hat{Q}_{13}^TP_{11}C^T+\hat{Q}_{13}^TP_{12}\bar{C}_r^T-\hat{Q}_{23}P_{12}^TC^T\nonumber\\
&\hspace*{2.9cm}+\hat{Q}_{23}\hat{P}_{22}\bar{C}_r^T-\hat{Q}_{13}^T\hat{P}_{13}\bar{C}_r^T-\hat{Q}_{33}\hat{P}_{13}^TC^T\nonumber\\
&\hspace*{2.9cm}-\hat{Q}_{23}\hat{P}_{23}\bar{C}_r^T+\hat{Q}_{33}\hat{P}_{23}^T\bar{C}_r^T-\hat{Q}_{33}\hat{P}_{33}\bar{C}_r^T\big)D^{-T},\nonumber\\
\bar{Z}&=D^{-T}\big(D^{-1}C\hat{P}_{13}-D^{-1}\bar{C}_r\hat{P}_{23}-D^{-1}\bar{C}_r\hat{P}_{23}^T+D^{-1}\bar{C}_r\hat{P}_{33}\nonumber\\
&\hspace*{3.1cm}+\bar{B}_r^T\hat{Q}_{13}^T\hat{P}_{13}+\bar{B}_r^T\hat{Q}_{13}^TP_{12}+\bar{B}_r^T\hat{Q}_{23}\hat{P}_{23}\nonumber\\
&\hspace*{3.1cm}+\bar{B}_r^T\hat{Q}_{23}\hat{P}_{22}+\bar{B}_r^T\hat{Q}_{33}\hat{P}_{33}+\bar{B}_r^T\hat{Q}_{33}\hat{P}_{23}^T\big).\nonumber
\end{align}
\end{theorem}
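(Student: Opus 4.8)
The plan is to regard $J := ||\Delta_{mul}(s)||_{\mathcal{H}_2}^2 = \textnormal{trace}(C_{mul}P_{mul}C_{mul}^T)$ as a smooth function of the free reduced-order matrices $\bar{A}_r$, $\bar{B}_r$, $\bar{C}_r$ and to impose that its gradient with respect to each of them vanishes at a local optimum. The three resulting stationarity equations will be identified with (\ref{opc1}), (\ref{opc2}), (\ref{opc3}): the $r\times r$ condition (\ref{opc1}) is the gradient in $\bar{A}_r$, the $r\times m$ condition (\ref{opc2}) the gradient in $\bar{B}_r$, and the $m\times r$ condition (\ref{opc3}) the gradient in $\bar{C}_r$, the dimensions matching those of the respective parameter.

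The engine of the proof is an adjoint argument that removes the implicit dependence of $P_{mul}$ on the parameters. For a scalar entry $\theta$ of one of the reduced matrices, differentiating the Lyapunov equation for $P_{mul}$ shows that $\partial P_{mul}/\partial\theta$ solves $A_{mul}(\partial P_{mul}/\partial\theta)+(\partial P_{mul}/\partial\theta)A_{mul}^T+S_\theta=0$, where $S_\theta=(\partial A_{mul}/\partial\theta)P_{mul}+P_{mul}(\partial A_{mul}/\partial\theta)^T+\partial(B_{mul}B_{mul}^T)/\partial\theta$. Differentiating $J$ gives $\partial J/\partial\theta=2\,\textnormal{trace}\big((\partial C_{mul}/\partial\theta)P_{mul}C_{mul}^T\big)+\textnormal{trace}\big(C_{mul}^TC_{mul}\,\partial P_{mul}/\partial\theta\big)$, and since $Q_{mul}$ satisfies $A_{mul}^TQ_{mul}+Q_{mul}A_{mul}+C_{mul}^TC_{mul}=0$, Lemma~\ref{lemma} (applied with $\hat{A}=A_{mul}$, $\hat{B}=A_{mul}^T$, $\hat{C}=S_\theta$, $\hat{D}=C_{mul}^TC_{mul}$) replaces the second trace by $\textnormal{trace}(S_\theta Q_{mul})$. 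This eliminates $\partial P_{mul}/\partial\theta$ and expresses every gradient entirely through the already-defined blocks of $P_{mul}$ and $Q_{mul}$.

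What remains is block bookkeeping. The cleanest case is $\bar{A}_r$, which enters $A_{mul}$ additively in both the $(2,2)$ and $(3,3)$ blocks and nowhere in $B_{mul}$ or $C_{mul}$; assembling the gradient then shows it equals, up to transpose, the sum of the $(2,2)$ and $(3,3)$ diagonal blocks of $Q_{mul}P_{mul}$, and writing this sum out in the $3\times3$ partitions yields $\hat{Q}_{12}^TP_{12}+\hat{Q}_{22}\hat{P}_{22}$ together with exactly the four terms grouped as $\bar{X}$, giving (\ref{opc1}). The parameters $\bar{B}_r$ and $\bar{C}_r$ are handled the same way, except that $\bar{B}_r$ also contributes through the block rows $(3,1),(3,2),(3,3)$ of $A_{mul}$ and through $B_{mul}$, while $\bar{C}_r$ contributes through the $(3,2),(3,3)$ blocks of $A_{mul}$ and directly through $C_{mul}$; in each case the leading clean terms come from the $H$- and $\bar{H}_r$-blocks, and all the coupling contributions from the inverse-system block $\bar{A}_r-\bar{B}_rD^{-1}\bar{C}_r$ collapse, after substituting the relations $\hat{Q}_{12}=-\hat{Q}_{13}$ and $\hat{Q}_{22}=-\hat{Q}_{23}=\hat{Q}_{33}$ from the Proposition, into $\bar{Y}$ and $\bar{Z}$ respectively.

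I expect the principal obstacle to be this final consolidation rather than any single differentiation. Because $\bar{A}_r$, $\bar{B}_r$, and $\bar{C}_r$ each appear in several blocks of $A_{mul}$ through the coupling term $\bar{A}_r-\bar{B}_rD^{-1}\bar{C}_r$ in the third block row and column, each gradient produces a large number of trace terms spread across all nine gramian blocks, often entering with opposite signs from the $(2,2)$ and $(3,3)$ contributions. The care lies in repeatedly applying the cyclic and transpose trace identities to align the factors, tracking these signs, and checking that the leftover terms regroup precisely into $\bar{X}$, $\bar{Y}$, and $\bar{Z}$; the Proposition is indispensable here, as it is what collapses the third-block $\hat{Q}$-entries so that only $\hat{Q}_{12}$ and $\hat{Q}_{22}$ survive as the leading coefficients in (\ref{opc1})--(\ref{opc3}).
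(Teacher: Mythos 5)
Your proposal is correct and follows essentially the same route as the paper: stationarity of the cost in $(\bar{A}_r,\bar{B}_r,\bar{C}_r)$, with Lemma~\ref{lemma} serving as the adjoint device that eliminates the gramian derivatives, followed by block bookkeeping in the $3\times 3$ partitions (your identification of (\ref{opc1}) as the vanishing of the $(2,2)$ plus $(3,3)$ blocks of $Q_{mul}P_{mul}$ checks out exactly against the paper's $\bar{X}$). The only differences are organizational rather than substantive: you differentiate the single monolithic Lyapunov equation for $P_{mul}$ and pair it with $Q_{mul}$ for all three gradients, whereas the paper differentiates the block equations for the $\hat{Q}_{ij}$ when treating $\bar{A}_r$ and $\bar{B}_r$ (pairing them with the $P$-blocks) and only uses your direction for $\bar{C}_r$; moreover, your appeal to the Proposition's identities $\hat{Q}_{12}=-\hat{Q}_{13}$, $\hat{Q}_{22}=-\hat{Q}_{23}=\hat{Q}_{33}$ in the final regrouping is not actually needed, since the derivation yields $\bar{X}$, $\bar{Y}$, $\bar{Z}$ in the stated form without invoking them.
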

\begin{proof}
The proof is given in the appendix.
    \end{proof}
\begin{remark}
In \cite{petersson2013nonlinear}, the necessary conditions for the local optimum of $||W(s)\Delta_{add}$ $(s)V(s)||_{\mathcal{H}_2}^2$ are derived. It can be easily verified that by putting $W(s)=\bar{H}^{-1}_r(s)$ and $V(s)=I$, the necessary conditions in Theorem \ref{th1} cannot be obtained. This is because it is assumed in \cite{petersson2013nonlinear} that $W(s)$ does not depend on $\bar{H}_r(s)$.
\end{remark}
\subsection{An Oblique Projection Algorithm}
Let us assume that $\hat{Q}_{22}$ and $\hat{P}_{22}$ are invertible. Then according to the optimality conditions (\ref{opc2}) and (\ref{opc3}), the optimal choices for $\bar{B}_r$ and $\bar{C}_r$ are the following
\begin{align}
\bar{B}_r&=-\hat{Q}_{22}^{-1}\hat{Q}_{12}^TB-\hat{Q}_{22}^{-1}\bar{Y}&\bar{C}_r&=CP_{12}\hat{P}_{22}^{-1}-(D^TD)\bar{Z}\hat{P}_{22}^{-1}.\nonumber
\end{align}
If the ROM is constructed via oblique projection, this suggests selecting the reduction matrices as $\bar{W}_r=-\hat{Q}_{12}\hat{Q}_{22}^{-1}$ and $\bar{V}_r=P_{12}\hat{P}_{22}^{-1}$. Then, by the virtue of oblique projection condition, $\bar{W}_r^T\bar{V}_r=I$, $\hat{Q}_{12}^TP_{12}+\hat{Q}_{22}\hat{P}_{22}=0$. Further, it can readily be noted from the equation (\ref{v2.22}) that $P_{23}=0$ when $\bar{C}_r=CP_{12}\hat{P}_{22}^{-1}$. The deviations in the satisfaction of optimality conditions (\ref{opc1})-(\ref{opc3}) with this selection of reduction matrices in the oblique projection framework are respectively given as
\begin{align}
\hspace*{1cm}d_1&=\hat{Q}_{13}^T\hat{P}_{13}+\hat{Q}_{33}\hat{P}_{33},\nonumber\\
d_2&=\big(-\hat{Q}_{13}^TP_{11}+\hat{Q}_{13}^TP_{12}\hat{P}_{22}^{-1}P_{12}^T-\hat{Q}_{13}^T\hat{P}_{13}\hat{P}_{22}^{-1}P_{12}^T\nonumber\\
&\hspace*{2.35cm}-\hat{Q}_{33}\hat{P}_{13}^T-\hat{Q}_{33}\hat{P}_{33}\hat{P}_{22}^{-1}P_{12}^T\big)C^TD^{-T},\nonumber\\
d_3&=D^{-T}\big(D^{-1}C\hat{P}_{13}+D^{-1}CP_{12}\hat{P}_{22}^{-1}\hat{P}_{33}+B^T\hat{Q}_{13}\hat{Q}_{22}^{-1}\hat{Q}_{13}^T\hat{P}_{13}\nonumber\\
&\hspace*{3.1cm}+B^T\hat{Q}_{13}\hat{Q}_{22}^{-1}\hat{Q}_{13}^TP_{12}+B\hat{Q}_{13}\hat{P}_{22}+B^T\hat{Q}_{13}\hat{P}_{33}\nonumber\\
&\hspace*{3.1cm}-B^T\hat{Q}_{13}\hat{P}_{22}+B^T\hat{Q}_{13}\hat{P}_{33}\big).\nonumber
\end{align}In general, $d_1\neq0$, $d_2\neq0$, and $d_3\neq0$. Therefore, it is inherently not possible to construct a local optimum of $||\Delta_{mul}(s)||_{\mathcal{H}_2}^2$ within the oblique projection framework. Nevertheless, the choice of $\bar{V}_r=P_{12}\hat{P}_{22}^{-1}$ and $\bar{W}_r=-\hat{Q}_{12}\hat{Q}_{22}^{-1}$ targets the optimality conditions (\ref{opc1})-(\ref{opc3}) and tends to achieve a local optimum of $||\Delta_{mul}(s)||_{\mathcal{H}_2}^2$. However, $\bar{V}_r$ and $\bar{W}_r$ with this choice depend on the ROM to be constructed, which is not known \textit{apriori}. The matrices $(\bar{A}_r,\bar{B}_r,\bar{C}_r)$ and $(\bar{V}_r,\bar{W}_r)$ can be seen as two coupled systems, i.e.,
\begin{align}
(\bar{A}_r,\bar{B}_r,\bar{C}_r)&=g(\bar{V}_r,\bar{W}_r)&\textnormal{and}&&(\bar{V}_r,\bar{W}_r)&=f(\bar{A}_r,\bar{B}_r,\bar{C}_r).\nonumber
\end{align}
The problem of constructing a ROM with the reduction matrices $\bar{V}_r=P_{12}\hat{P}_{22}^{-1}$ and $\bar{W}_r=-\hat{Q}_{12}\hat{Q}_{22}^{-1}$ can be seen as that of finding stationary points of
\begin{align}
(\bar{A}_r,\bar{B}_r,\bar{C}_r)=g\big(f(\bar{A}_r,\bar{B}_r,\bar{C}_r)\big)\nonumber
\end{align} with an additional constraint that $\bar{W}_r^T\bar{V}_r=I$. By starting with a random guess of $(\bar{A}_r,\bar{B}_r,\bar{C}_r)$, the stationary points can be obtained by using a stationary point iteration algorithm. The oblique projection condition $\bar{W}_r^T\bar{V}_r=I$ can be enforced in each iteration by using the bi-orthogonal Gram Smith method \cite{benner2011sparse}. However, the issue is that the solvability of $\hat{Q}_{12}$ and $\hat{Q}_{22}$ requires $\bar{H}_r(s)$ to be a minimum phase system in every iteration of the stationary point iteration algorithm, which cannot be generally guaranteed. We will shortly address this issue, but first, note that $\bar{V}_r=P_{12}$ and $\bar{W}_r=\hat{Q}_{12}$ span the same subspace as that spanned by $\bar{V}_r=P_{12}\hat{P}_{22}^{-1}$ and $\bar{W}_r=-\hat{Q}_{12}\hat{Q}_{22}^{-1}$ as $\hat{P}_{22}^{-1}$ and $-\hat{Q}_{22}^{-1}$ only change the basis, cf. Lemma 1.1 of \cite{gallivan2004sylvester}. The requirement for invertibility of $\hat{P}_{22}$ and $\hat{Q}_{22}$ in every iteration may cause numerical ill-conditioning or even failure. Therefore, $\bar{V}_r=P_{12}$ and $\bar{W}_r=\hat{Q}_{12}$ are a better choice for the reduction subspaces.

The optimality conditions (\ref{opc1})-(\ref{opc3}) are derived with an assumption that $\bar{H}_r(s)$ is a minimum phase system. If this condition is violated in an iteration, $\hat{Q}_{12}$ and $\hat{Q}_{22}$ cannot be obtained by solving the equations (\ref{v2.26}) and (\ref{v2.28}), respectively.

Let $\bar{G}_r(s)$ be a minimum phase right spectral factor of $\bar{H}_r(s)\bar{H}_r^*(s)$, such that $\bar{G}_r^*(s)\bar{G}_r(s)=\bar{H}_r(s)\bar{H}_r^*(s)$. Then $||\Delta_{mul}(s)||_{\mathcal{H}_2}$ can be redefined as
  \begin{align}
  ||\Delta_{mul}(s)||_{\mathcal{H}_2}
  &=\sqrt{\frac{1}{2\pi}\int_{-\infty}^{\infty}trace\Big(\Delta_{add}^*(j\omega)\bar{G}_r^{-1}(j\omega)\bar{G}_r^{-*}(j\omega)\Delta_{add}(j\omega)\Big)d\omega}.\nonumber
  \end{align}
It can readily be noted that $\bar{H}_r^{-1}(s)$ can be replaced with $\bar{G}_r^{-*}(s)$ to deal with a non minimum phase $\bar{H}_r(s)$. Next, we need to find a state-space realization of $\bar{G}_r^{-*}(s)$.

Let $\hat{P}_{11}$ be the controllability gramian of the pair $(\bar{A}_r,\bar{B}_r)$, which satisfies the following Lyapunov equation
   \begin{align}
   \bar{A}_r\hat{P}_{11}+\hat{P}_{11}\bar{A}_r^T+\bar{B}_r\bar{B}_r^T=0.\nonumber
   \end{align}
   Further, define $\bar{B}_x$ and $\bar{A}_x$ as the following
  \begin{align}
  \bar{B}_x&=\hat{P}_{11}\bar{C}_r^T+\bar{B}_rD^T,&\bar{A}_x&=\bar{A}_r-\bar{B}_x(DD^T)^{-1}\bar{C}_r.\nonumber
  \end{align}
  Let $\hat{X}$ solves the following Riccati equation
  \begin{align}
  \bar{A}_x^T\hat{X}+\hat{X}\bar{A}_x+\hat{X}\bar{B}_x(DD^T)^{-1}\bar{B}_x^T\hat{X}+\bar{C}_r^T(DD^T)^{-1}\bar{C}_r=0.\label{nnee38}
  \end{align}
  Then a possible realization of $\hat{G}_r(s)$ can be computed as
\begin{align}
\bar{A}_{sp}&=\bar{A}_r,&\bar{B}_{sp}&=\bar{B}_x,&\bar{C}_{sp}&=D^{-1}(\bar{C}_r-\bar{B}_x^T\hat{X}),&\bar{D}_{sp}&=D^T.\label{Req21}
\end{align}
Next, a possible minimum phase realization of $\bar{G}_r^*(s)$ is given as
  \begin{align}
  \bar{A}_g&=-\bar{A}_{sp}^T,&\bar{B}_g&=-\bar{C}_{sp}^T,&\bar{C}_g&=\bar{B}_{sp}^T,&\bar{D}_g=\bar{D}_{sp}^T.\label{nnee39}
  \end{align}
  Further, a possible stable realization of $\bar{G}_r^{-*}(s)$ is given as
  \begin{align}
\bar{A}_{g_i}&=\bar{A}_g-\bar{B}_g\bar{D}_g^{-1}\bar{C}_g,& \bar{B}_{g_i}&=-\bar{B}_g\bar{D}_g^{-1}, & \bar{C}_{g_i}&=\bar{D}_g^{-1}\bar{C}_g,& \bar{D}_{g_i}&=\bar{D}_g^{-1}.\label{nnee40}
\end{align}
We are now in a position to state the algorithm, which is named ``Iterative Relative-error $\mathcal{H}_2$-MOR Algorithm (IRHMORA)". The pseudo-code of IRHMORA is given in Algorithm \ref{Alg1}.
\begin{algorithm}[!h]
\textbf{Input:} Original system: $(A,B,C,D)$; Initial guess: $(\bar{A}_r,\bar{B}_r,\bar{C}_r)$\\
\textbf{Output:} ROM $(\bar{A}_r,\bar{B}_r,\bar{C}_r)$.
  \begin{algorithmic}[1]
      \STATE \hspace*{0.5cm}\textbf{if} ($rank[D]<m$)
      \STATE $D=\epsilon I$.\label{stp2}
      \STATE \hspace*{0.5cm}\textbf{end if}
      \STATE $i_k=0$.
      \STATE \hspace*{0.5cm}\textbf{while} (not converged) \textbf{do}
      \STATE $i_k=i_k+1$.
      \STATE Compute $\hat{X}$ from the equation (\ref{nnee38}).\label{stp7}
      \STATE Compute $(\bar{A}_{gi},\bar{B}_{gi},\bar{C}_{gi},\bar{D}_{gi})$ from (\ref{Req21})-(\ref{nnee40}).\label{stp8}
       \STATE Compute $\hat{P}_{12}$ from the equation (\ref{nnee7}).\label{stp9}
      \STATE Compute $\bar{Q}_{33}$, $\bar{Q}_{13}$, $\bar{Q}_{23}$, and $\bar{Q}_{12}$ by solving\label{stp10}\\
      $\bar{A}_{gi}^T\bar{Q}_{33}+\bar{Q}_{33}\bar{A}_{gi}+\bar{C}_{gi}^T\bar{C}_{gi}=0,$\\
      $A^T\bar{Q}_{13}+\bar{Q}_{13}\bar{A}_{gi}+C^T(\bar{B}_{gi}^T\bar{Q}_{33}+\bar{D}_{gi}^T\bar{C}_{gi})=0,$\\
      $\bar{A}_r^T\bar{Q}_{23}+\bar{Q}_{23}\bar{A}_{gi}-\bar{C}_r^T(\bar{B}_{gi}^T\bar{Q}_{33}+\bar{D}_{gi}^T\bar{C}_{gi})=0,$\\
      $A^T\bar{Q}_{12}+\bar{Q}_{12}\bar{A}_r+C^T\bar{B}_{gi}^T\bar{Q}_{23}^T-\bar{Q}_{13}\bar{B}_{gi}\bar{C}_r-C^T\bar{D}_{gi}^T\bar{D}_{gi}\bar{C}_r=0$.
      \STATE $\bar{V}_r=\hat{P}_{12}$, $\bar{W}_r=\bar{Q}_{12}$, and bi-orthogonalize to ensure $\bar{W}_r^T\bar{V}_r=I$; cf. \cite{benner2011sparse}.\label{Rstp11}
      \STATE $\bar{A}_r=\bar{W}_r^TA\bar{V}_r$, $\bar{B}_r=\bar{W}_r^TB$, $\bar{C}_r=C\bar{V}_r$.\label{stp17}
      \STATE \hspace*{1.5cm}\textbf{if} ($i_k=$ allowable number of iterations)
      \STATE \hspace*{0.5cm}Break loop.\label{stp19}
      \STATE \hspace*{1.5cm}\textbf{end if}
      \STATE \hspace*{0.5cm}\textbf{end while}
  \end{algorithmic}
  \caption{IRHMORA}\label{Alg1}
\end{algorithm}
Step (\ref{stp2}) replaces the original $D$ matrix with an artificial one if it is rank deficient. Steps (\ref{stp7}) and (\ref{stp8}) compute a stable realization of $\bar{G}_r^{-1}(s)$. Steps (\ref{stp9}) and (\ref{stp10}) compute the reduction matrices $\bar{V}_r$ and $\bar{W}_r$. Step (\ref{Rstp11}) enforces the condition $\bar{W}_r^T\bar{V}_r=I$. Step (\ref{stp17}) updates the ROM, and Step (\ref{stp19}) stops the algorithm prematurely if it does not converge.
\subsubsection{Choice of Free Parameters}
In \cite{zulfiqar2021frequency}, it is proposed to select the initial ROM such that it contains the dominant poles (having large residues) of $W(s)$ and $H(s)$, as it ensures that $||W(s)\Delta_{add}(s)||_{\mathcal{H}_2}$ is small to start with. In the problem under consideration, $W(s)=\bar{H}_r^{-1}(s)$ is not known \textit{apriori}. Nevertheless, heuristically, it is suggested to select an initial ROM, which contains dominant poles (with large residues) and/or zeros of $H(s)$. Such a ROM can be generated by using the efficient algorithms in \cite{rommes2006efficient} and \cite{martins2007computation}. The value of $\epsilon$ to generate an artificial $D$-matrix should be close to zero but enough to ensure the invertibility of $D$.
\subsubsection{Stopping Criterion}
The convergence of IRHMORA is not guaranteed, as is the case in most of the $\mathcal{H}_2$ MOR methods. To ensure the computational efficiency of IRHMORA in case it does not converge, it can be stopped after a maximum number of allowable iterations.
\subsubsection{Computational Savings}
The main advantage of IRHMORA is that it does not require solutions of large-scale Lyapunov and Riccati equations, unlike the situation in BST and IFWHMORA. Instead, it requires similar small-scale Lyapunov and Riccati equations, which can be computed cheaply. Further, the reduction matrices are computed by solving (sparse-dense) Sylvester equations, which typically arise in most of the $\mathcal{H}_2$-optimal MOR algorithms, cf. \cite{benner2011sparse}. Fortunately, an efficient algorithm for this type of Sylvester equation is proposed in \cite{benner2011sparse}, which can compute the reduction matrices of IRHMORA. Therefore, IRHMORA is computationally much more efficient than BST and IFWHMORA.
\section{Numerical Examples}
In this section, IRHMORA is tested on three numerical examples. The first two examples are SISO systems, while the third one is a MIMO system.
\subsection{Experimental Setup}
IRHMORA is compared with the Two-sided Iteration Algorithm (TSIA), which constructs a local optimum of $||\Delta_{add}(s)||_{\mathcal{H}_2}^2$ upon convergence, and BST, which is among the most accurate algorithm in the literature for the relative error criteria I and II. IRHMORA is also compared with BT, which is among the most accurate MOR algorithms. TSIA and IRHMORA are initialized arbitrarily with the same initial guess using MATLAB's ``rss" function. Since the $D$-matrix in all the examples is $0$, $\epsilon$ is set to $0.001$. With this value of $\epsilon$, the Riccati equations in IRHMORA and BST were solved without any problems in our experiments. The effect of $\epsilon$ on the accuracy of the ROM is not noticeable when it is close to zero. Thus we have not included results with different values of $\epsilon$. The maximum number of iterations for TSIA and IRHMORA is set to $200$. However, in all our experiments, both algorithms converged within $50$ iterations for all the examples.

For each example, an $\mathcal{H}_\infty$ controller $K(s)$ is designed using the loop shaping procedure proposed in \cite{mcfarlane1992loop}. The bandwidth $[0,\omega_0]$ rad/sec in each example is selected arbitrarily for demonstration purposes. In loop shaping controller design, a controller is designed, which ensures that the loop gain of $\bar{H}_r(s)K(s)$ is high before the crossover frequency $\omega_0$ for good noise attenuation in that region. Further, the loop gain of $\bar{H}_r(s)K(s)$ is desired to be low after the crossover frequency for good robust stability in the presence of uncertainties. Generally, the desired loop shape $\omega_0/s$ satisfies this condition. Since the loop gain $\bar{H}_r(s)K(s)$ is high within the bandwidth, $I+\bar{H}_r(s)K(s)\approx \bar{H}_r(s)K(s)$ and $\bar{H}_r(s)K(s)[I+\bar{H}_r(s)K(s)]^{-1}\approx I$ in that region. Further, since the loop gain $\bar{H}_r(s)K(s)$ is low outside the bandwidth, $I+H(s)K(s)\approx I$ in that region. Recall from the introduction section that to ensure that $K(s)$ is also a stabilizing controller for $H(s)$, $H(s)$ should be reduced using the following criterion
\begin{align}
\underset{\substack{\bar{H}_r(s)\\\textnormal{order}=r}}{\text{min}}||[\Delta_{add}(s)\bar{H}_r^{-1}(s)]\bar{H}_r(s)K(s)[I+\bar{H}_r(s)K(s)]^{-1}||,
\end{align} cf. \cite{obinata2012model}. Also, recall that this criterion has the following equivalent representation
\begin{align}
\underset{\substack{\bar{H}_r(s)\\\textnormal{order}=r}}{\text{min}}||[I+K(s)\bar{H}_r(s)]^{-1}K(s)\bar{H}_r(s)\Delta_{mul}(s)||,
\end{align} cf. \cite{ennth}. It can readily be noted that with loop shaping controller design, the plant reduction problem reduces to relative error MOR problem. In this section, it is highlighted with numerical results that IRHMORA provides good accuracy with respect to both $\Delta_{rel}(s)$ and $\Delta_{mul}(s)$, similar to BST. The closed-loop robust stability performance of BT, BST, TSIA, and IRHMORA is also compared after designing $\mathcal{H}_\infty$ controllers for the reduced-order plants constructed by these algorithms. To that end, the actual loop shape achieved with the original high-order plant, i.e., $H(s)K(s)$ and the robust stability measure $[I+K(s)\bar{H}_r(s)]^{-1}K(s)\bar{H}_r(s)\Delta_{mul}(s)||_{\mathcal{H}_\infty}$ are compared. The experiments are performed using MATLAB R2016a on a laptop with a $2$GHz Intel processor and $16$GB memory.
\subsection{Clamped Beam}
Clamped beam model is a $348^{th}$ order SISO system taken from the benchmark collection of dynamical systems for testing MOR algorithms, cf. \citep{chahlaoui2005benchmark}. The ROMs of orders $15-40$ are constructed using BT, BST, TSIA, and IRHMORA, and the respective $||\Delta_{mul}(s)||_{\mathcal{H}_2}$ are tabulated in Table \ref{tab1}.
\begin{table}[!h]
\centering
\caption{$\mathcal{H}_2$ norm of the Relative Error II, i.e., $||\Delta_{mul}(s)||_{\mathcal{H}_2}$}\label{tab1}
\begin{tabular}{|c|c|c|c|c|}
\hline
Order & BT & BST     & TSIA     & IRHMORA \\ \hline
15     & 2.3147 & 1.6905 & 2.8127 & 1.1913 \\ \hline
20     & 0.9632& 0.8097 & 0.9278  & 0.7954 \\ \hline
25     & 0.0589& 0.0559 & 0.0615  & 0.0517 \\ \hline
30     & 0.0235   & 0.0228 & 0.0253  & 0.0211  \\ \hline
35     & 0.0012   & 0.0011  & 0.0017  & 0.0009 \\ \hline
40    & 0.0011   & 0.0010  & 0.0012   & 0.0005  \\ \hline
\end{tabular}
\end{table}It can be noticed that IRHMORA ensures the least error in the $\mathcal{H}_2$ norm. In this example, BT has offered slightly better accuracy in terms of relative error than TSIA showing that the ROMs constructed by BT are more accurate than the local optima captured by TSIA. The singular values of $H(s)$ and the $20^{th}$ order ROM are plotted in Figure \ref{fig1}, which is a good visual tool for accessing $||\Delta_{mul}(s)||_{\mathcal{H}_\infty}$ in the frequency domain.
\begin{figure}[!h]
  \centering
  \includegraphics[width=8cm]{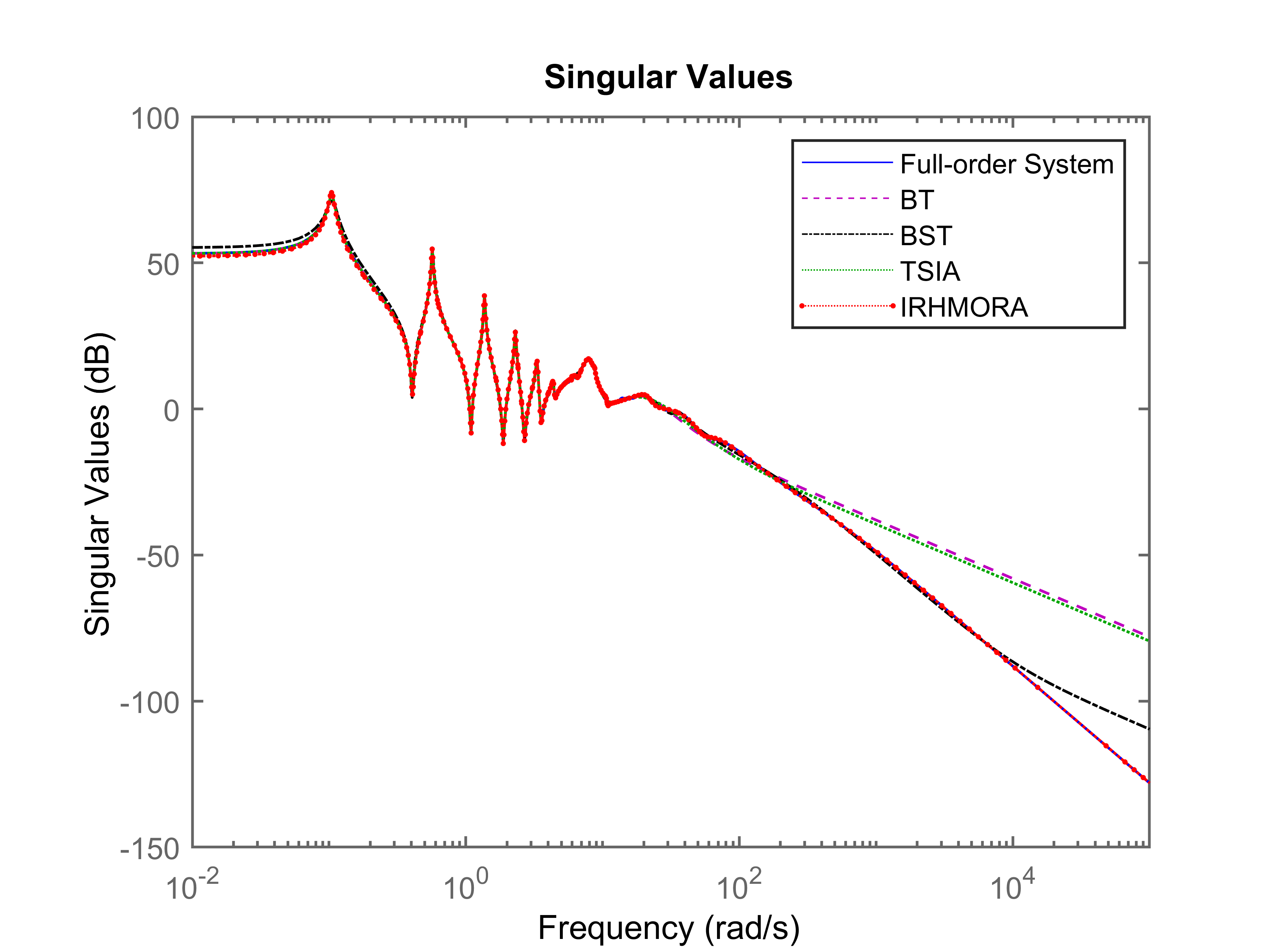}
  \caption{Sigma plots of the original and reduced models}\label{fig1}
\end{figure}It can be seen that IRHMORA offers good accuracy. Since $\Delta_{rel}(s)$ and $\Delta_{mul}(s)$ are closely related, the singular values of $\Delta_{rel}(s)$ for $15^{th}$ order ROMs are plotted in Figure \ref{fig2} for comparison.
\begin{figure}[!h]
  \centering
  \includegraphics[width=8cm]{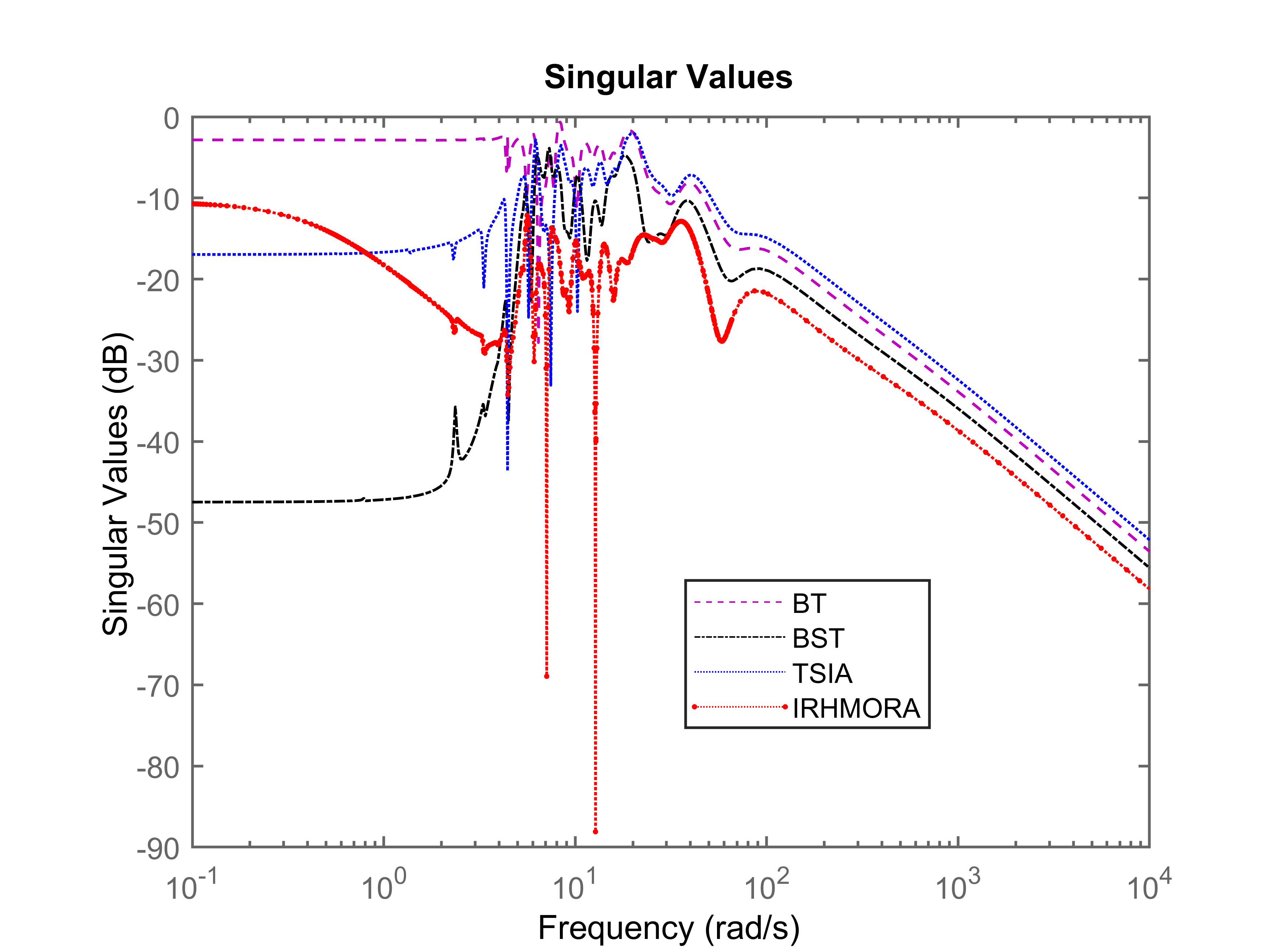}
  \caption{Sigma plot of the relative error I $\Delta_{rel}(s)$}\label{fig2}
\end{figure} Again, it can be seen that IRHMORA and BST offer good accuracy. In terms of the worst-case scenario, it can be noted from Figure \ref{fig2} that BST has done a slightly better job than IRHMORA since BST tends to minimize the $\mathcal{H}_\infty$ norm of $\Delta_{rel}(s)$.

For demonstration purposes, the crossover frequency for designing the $\mathcal{H}_\infty$ controller via loop shaping is selected as $5$ rad/sec. Accordingly, the desired loop shape is selected as $5/s$. An $18^{th}$-order $\mathcal{H}_\infty$ controller is designed using the loop shaping procedure proposed in \cite{mcfarlane1992loop}. The $15^{th}$-order ROMs are used as plant models for the controllers designed. The actual loop shapes achieved with these controllers and the full-order plant are plotted in Figure \ref{figR3}.
\begin{figure}[!h]
  \centering
  \includegraphics[width=8cm]{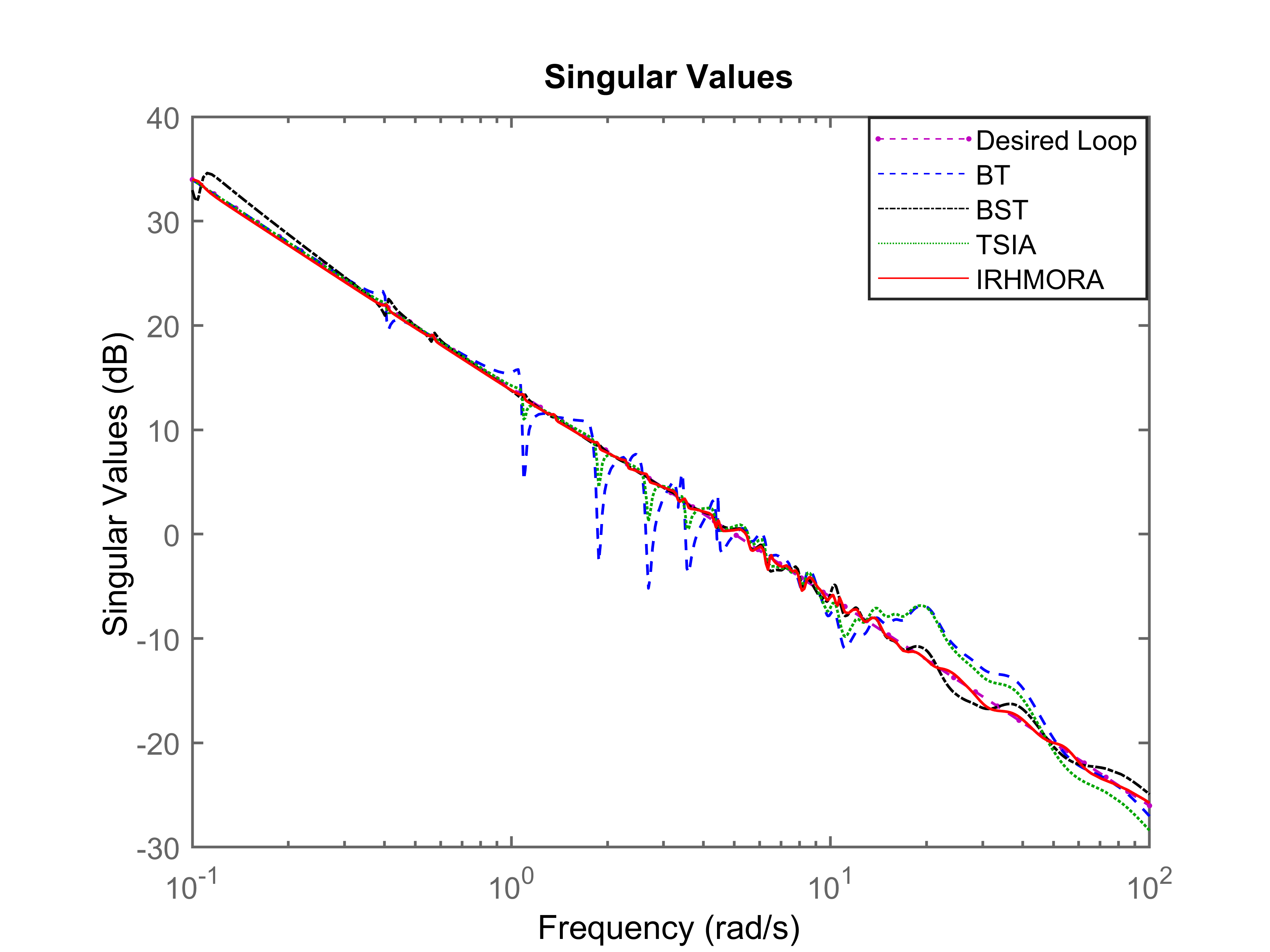}
  \caption{Loop Shape $H(s)K(s)$}\label{figR3}
\end{figure}
It can be seen that the loop shapes achieved by using the ROMs constructed by BST and IRHMORA are closest to the desired loop shape. The robust stability measure of the ROMs is tabulated in Table \ref{tabR1}.
\begin{table}[!h]
\centering
\caption{Robust Stability Measure}\label{tabR1}
\begin{tabular}{|c|c|c|c|c|}
\hline
Method & $||[I+K(s)\bar{H}_r(s)]^{-1}K(s)\bar{H}_r(s)\Delta_{mul}(s)||_{\mathcal{H}_\infty}$ \\ \hline
BT     & 0.6557  \\ \hline
BST     & 0.0357\\ \hline
TSIA     & 0.8491\\ \hline
IRHMORA    & 0.0240\\ \hline
\end{tabular}
\end{table}It can be seen that the ROMs constructed by BST and IRHMORA and the subsequent controllers designed for these models provide the best robust stability when connected in the closed loop. The reason for this superior performance is that the additive error is an open-loop reduction criterion, whereas the relative error is inherently a closed-loop reduction criterion.
\subsection{Artificial Dynamical System}
Artificial dynamical system model is a $1006^{th}$ order SISO system taken from the benchmark collection of dynamical systems for testing MOR algorithms, cf. \citep{chahlaoui2005benchmark}. The ROMs of orders $15-40$ are constructed using BT, BST, TSIA, and IRHMORA, and the respective $||\Delta_{mul}(s)||_{\mathcal{H}_2}$ are tabulated in Table \ref{tab2}.
\begin{table}[!h]
\centering
\caption{$\mathcal{H}_2$ norm of the Relative Error II, i.e., $||\Delta_{mul}(s)||_{\mathcal{H}_2}$}\label{tab2}
\begin{tabular}{|c|c|c|c|c|}
\hline
Order &BT& BST     & TSIA     & IRHMORA \\ \hline
15    &0.7246 & 0.0055 & 0.0457 & 0.0043 \\ \hline
20    &0.0081 & $9.058\times 10^{-4}$ & 0.0062  &  $3.913\times 10^{-4}$\\ \hline
25    &$6.324\times 10^{-4}$& $5.750\times 10^{-5}$ & $2.785\times 10^{-4}$  & $4.569\times 10^{-5}$ \\ \hline
30    &$1.576\times 10^{-4}$ &$4.854\times 10^{-5}$ & $1.297\times 10^{-4}$ & $1.387\times 10^{-5}$  \\ \hline
35    &$1.411\times 10^{-4}$ &$4.757\times 10^{-5}$ & $1.218\times 10^{-4}$ & $1.292\times 10^{-5}$  \\ \hline
40    &$1.134\times 10^{-4}$ &$2.941\times 10^{-5}$ & $1.029\times 10^{-4}$ & $1.005\times 10^{-5}$  \\ \hline
\end{tabular}
\end{table}
It can be noticed that IRHMORA ensures the least error in the $\mathcal{H}_2$ norm. The singular values of $H(s)$ and the $15^{th}$ order ROM are plotted in Figure \ref{fig3}, which is a good visual tool for accessing $||\Delta_{mul}(s)||_{\mathcal{H}_\infty}$ in the frequency domain.
\begin{figure}[!h]
  \centering
  \includegraphics[width=8cm]{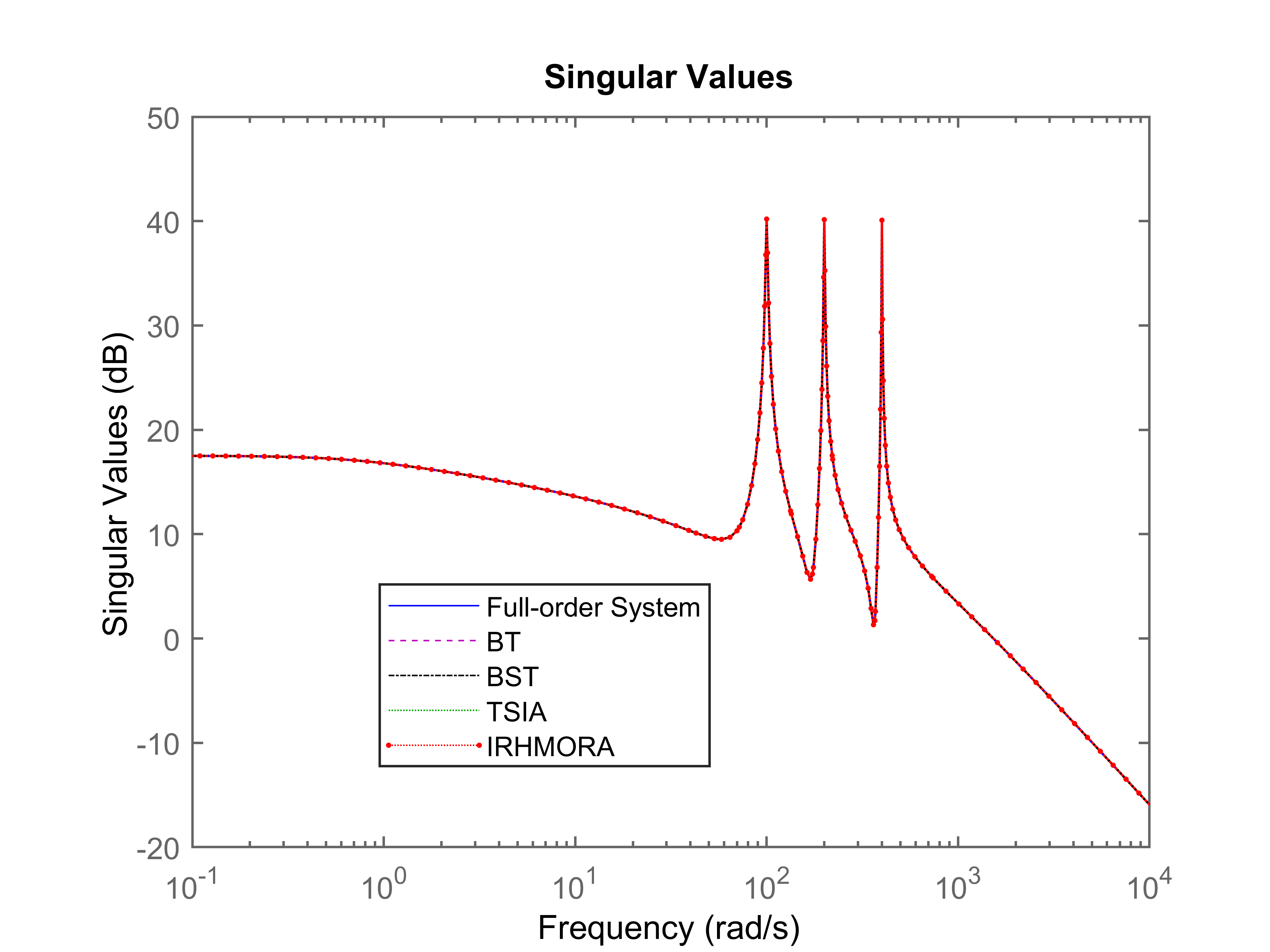}
  \caption{Sigma plots of the original and reduced models}\label{fig3}
\end{figure}
It can be seen that IRHMORA offers good accuracy. The singular values of $\Delta_{rel}(s)$ for $20^{th}$ order ROM are plotted in Figure \ref{fig4} for comparison.
\begin{figure}[!h]
  \centering
  \includegraphics[width=8cm]{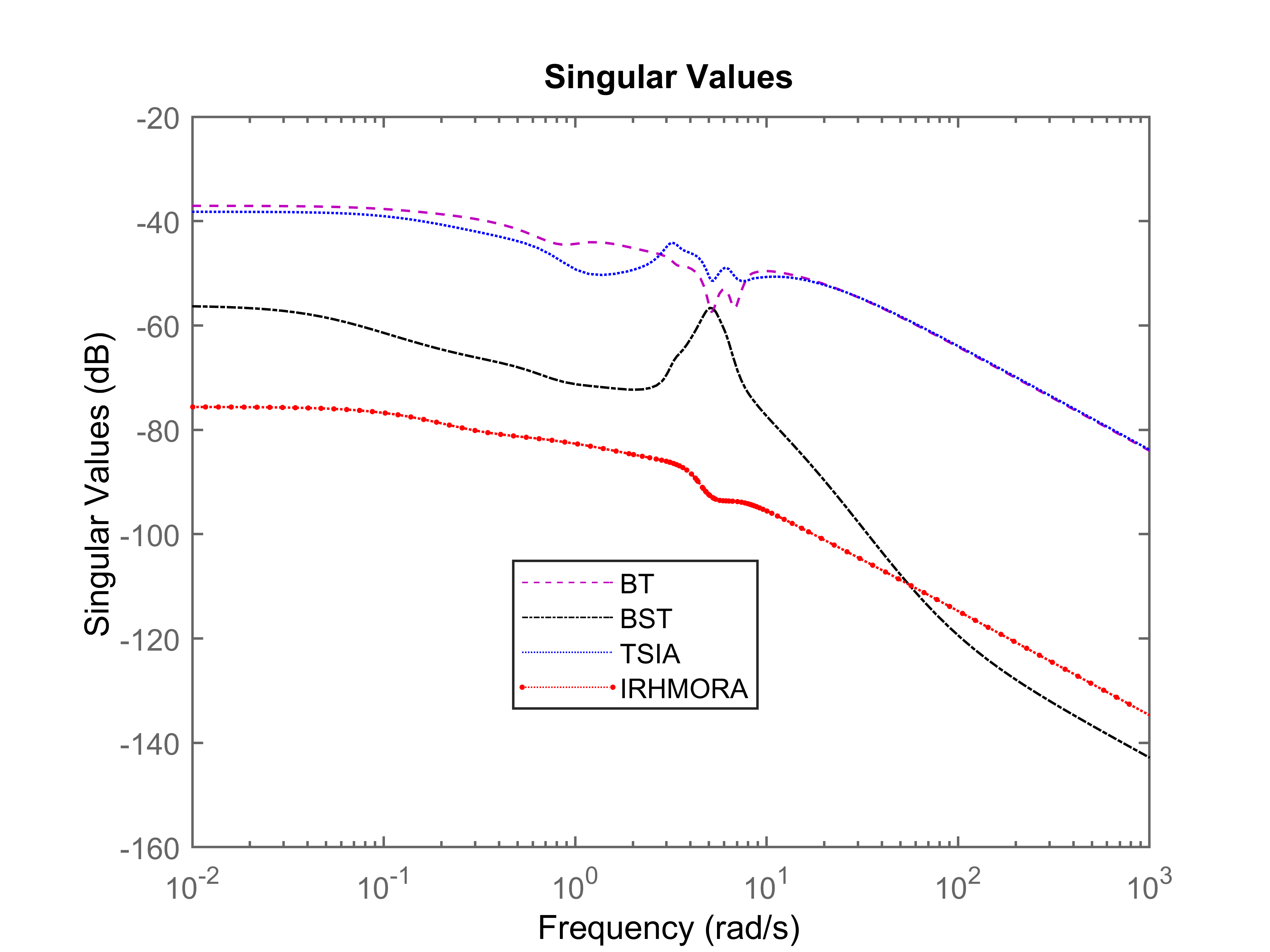}
  \caption{Sigma plot of the relative error I $\Delta_{rel}(s)$}\label{fig4}
\end{figure}
Again, it can be seen that IRHMORA offers good accuracy.

For demonstration purposes, the crossover frequency for designing the $\mathcal{H}_\infty$ controller via loop shaping is selected as $10$ rad/sec. Accordingly, the desired loop shape is selected as $10/s$. A $23^{th}$-order $\mathcal{H}_\infty$ controller is designed using the loop shaping procedure proposed in \cite{mcfarlane1992loop}. The $20^{th}$-order ROMs are used as plant models for the controllers designed. The actual loop shapes achieved with these controllers and the full-order plant are plotted in Figure \ref{figR4}.
\begin{figure}[!h]
  \centering
  \includegraphics[width=8cm]{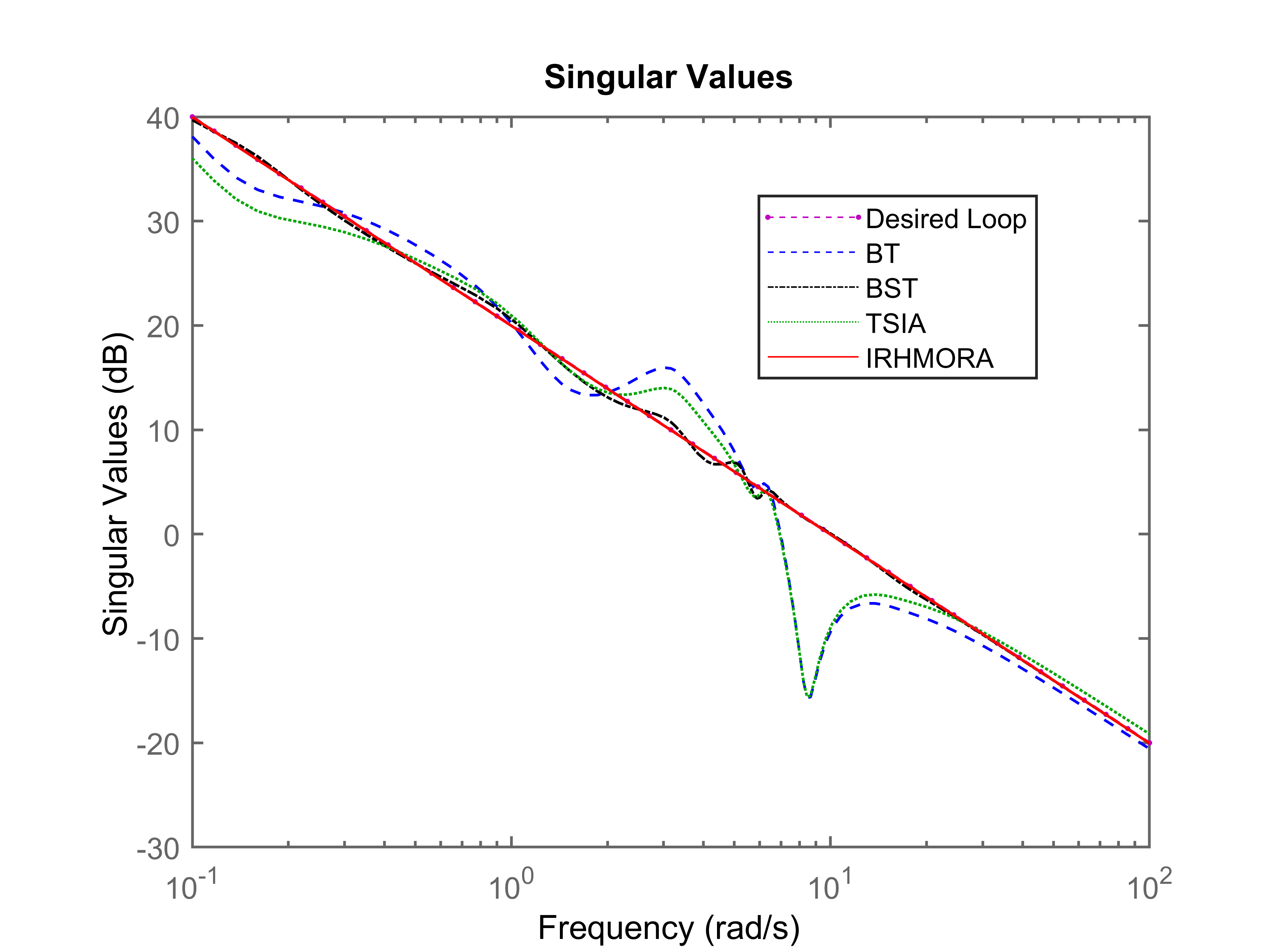}
  \caption{Loop Shape $H(s)K(s)$}\label{figR4}
\end{figure}
It can be seen that the loop shapes achieved by using the ROMs constructed by BST and IRHMORA are closest to the desired loop shape. The robust stability measure of the ROMs is tabulated in Table \ref{tabR2}.
\begin{table}[!h]
\centering
\caption{Robust Stability Measure}\label{tabR2}
\begin{tabular}{|c|c|c|c|c|}
\hline
Method & $||[I+K(s)\bar{H}_r(s)]^{-1}K(s)\bar{H}_r(s)\Delta_{mul}(s)||_{\mathcal{H}_\infty}$ \\ \hline
BT     & 1.5141  \\ \hline
BST     & 0.4253\\ \hline
TSIA     & 1.7906\\ \hline
IRHMORA    & 0.2954\\ \hline
\end{tabular}
\end{table}It can be seen that the ROMs constructed by BST and IRHMORA and the subsequent controllers designed for these models provide the best robust stability when connected in the closed loop.
\subsection{Brazil Interconnected Power System}
Brazil interconnected power system model is a $3077^{th}$ order MIMO system (with four inputs and four outputs) taken from \cite{rommes2009computing}. The ROMs of orders $15-40$ are constructed using BT, BST, TSIA, and IRHMORA, and the respective $||\Delta_{mul}(s)||_{\mathcal{H}_2}$ are tabulated in Table \ref{tab3}.
\begin{table}[!h]
\centering
\caption{$\mathcal{H}_2$ norm of the Relative Error II, i.e., $||\Delta_{mul}(s)||_{\mathcal{H}_2}$}\label{tab3}
\begin{tabular}{|c|c|c|c|c|}
\hline
Order &BT& BST     & TSIA     & IRHMORA \\ \hline
15     & 7.6929&6.5678 & 7.1019 & 5.3172 \\ \hline
20     & 5.9076&5.1848 & 5.4515 & 2.5768 \\ \hline
25     & 1.7864&1.2152 & 1.5220 & 0.8525 \\ \hline
30     & 0.9253&0.7418 & 0.9126 & 0.5693 \\ \hline
35     & 0.6012&0.5276 & 0.5551 & 0.2092 \\ \hline
40     & 0.5868&0.3102 & 0.4998 & 0.1509 \\ \hline
\end{tabular}
\end{table}
It can be noticed that IRHMORA ensures the least error. The computational time to construct the $15^{th}$-order ROM is tabulated in Table \ref{tabS}.
\begin{table}[!h]
\centering
\caption{Computational Time Comparison}\label{tabS}
\begin{tabular}{|c|c|c|c|c|}
\hline
Method & Time (sec)\\ \hline
BT     & 173.29  \\ \hline
BST     & 4330.96\\ \hline
TSIA     & 27.15\\ \hline
IRHMORA    & 36.66\\ \hline
\end{tabular}
\end{table} It can be seen that TSIA and IRHMORA take much less computational time to construct the ROM because they do not require the solution of high-order dense Lyapunov equations.

For demonstration purposes, the crossover frequency for designing the $\mathcal{H}_\infty$ controller via loop shaping is selected as $3$ rad/sec. Accordingly, the desired loop shape is selected as $3/s$. A $45^{th}$-order $\mathcal{H}_\infty$ controller is designed using the loop shaping procedure proposed in \cite{mcfarlane1992loop}. The $40^{th}$-order ROMs are used as plant models for the controllers designed. The actual loop shapes achieved with these controllers and the full-order plant are plotted in Figure \ref{figR5}.
\begin{figure}[!h]
  \centering
  \includegraphics[width=8cm]{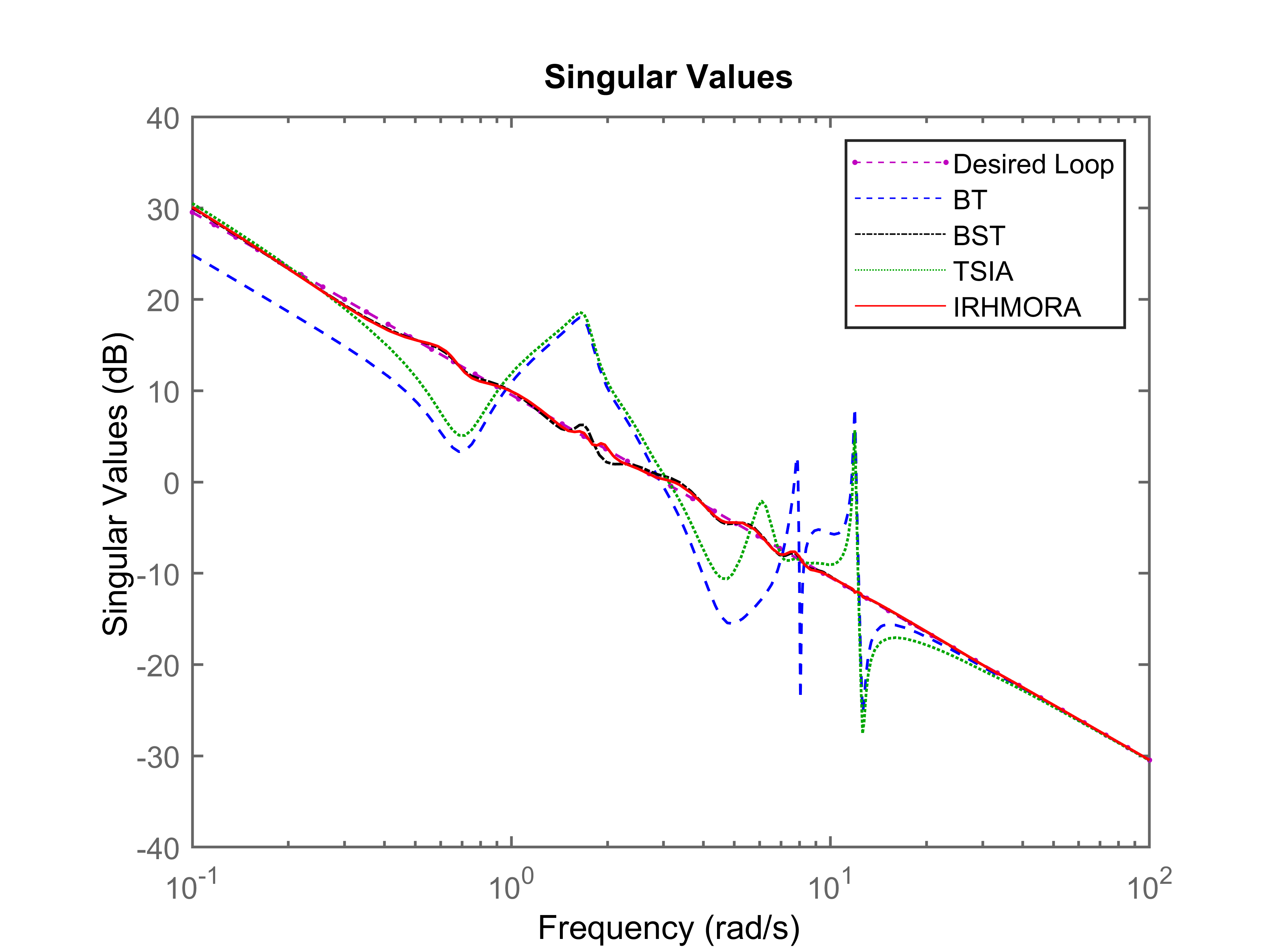}
  \caption{Loop Shape $H(s)K(s)$}\label{figR5}
\end{figure}
It can be seen that the loop shapes achieved by using the ROMs constructed by BST and IRHMORA are closest to the desired loop shape. The robust stability measure of the ROMs is tabulated in Table \ref{tabR3}.
\begin{table}[!h]
\centering
\caption{Robust Stability Measure}\label{tabR3}
\begin{tabular}{|c|c|c|c|c|}
\hline
Method & $||[I+K(s)\bar{H}_r(s)]^{-1}K(s)\bar{H}_r(s)\Delta_{mul}(s)||_{\mathcal{H}_\infty}$ \\ \hline
BT     & 5.6895  \\ \hline
BST     & 0.9440\\ \hline
TSIA     & 4.7935\\ \hline
IRHMORA    & 0.8982\\ \hline
\end{tabular}
\end{table}It can be seen that the ROMs constructed by BST and IRHMORA and the subsequent controllers designed for these models provide the best robust stability when connected in the closed loop.
\section{Conclusion}
The problem of relative error $\mathcal{H}_2$ MOR is addressed. It is shown that IFWHMORA can be used for this problem, but it requires solutions of large-scale Lyapunov/Sylvester and Riccati equations, which makes it computationally infeasible in a large-scale setting. A different relative error criterion is then considered, and the necessary conditions for a local optimum are derived. Based on these conditions, an oblique projection algorithm is proposed, which is computationally efficient. The proposed algorithm compares well in accuracy with BST, which is a gold standard for this problem in the literature. The significance of the proposed algorithm in designing reduced-order controllers for high-order plants is highlighted by considering benchmark numerical examples. The numerical results confirm the theory developed in the paper.
\section*{Appendix}
Let us define the cost function $\hat{J}$ as $\hat{J}=||\Delta_{mul}||_{\mathcal{H}_2}^2$ and denote the first-order derivative of $\hat{Q}_{11}$, $\hat{Q}_{12}$, $\hat{Q}_{13}$, $\hat{Q}_{22}$, $\hat{Q}_{23}$, $\hat{Q}_{33}$, and $\hat{J}$ with respect to $\bar{A}_r$ as $\Delta_{\hat{Q}_{11}}^{\bar{A}_r}$, $\Delta_{\hat{Q}_{12}}^{\bar{A}_r}$, $\Delta_{\hat{Q}_{13}}^{\bar{A}_r}$, $\Delta_{\hat{Q}_{22}}^{\bar{A}_r}$, $\Delta_{\hat{Q}_{23}}^{\bar{A}_r}$, $\Delta_{\hat{Q}_{33}}^{\bar{A}_r}$, and $\Delta_{\hat{J}}^{\bar{A}_r}$, respectively. Further, let us denote the differential of $\bar{A}_r$ as $\Delta_{\bar{A}_r}$. By differentiating the equations (\ref{v2.25})-(\ref{v2.30}) with respect to $\bar{A}_r$, we get
      \begin{align}
      A^T\Delta_{\hat{Q}_{11}}^{\bar{A}_r}+\Delta_{\hat{Q}_{11}}^{\bar{A}_r}A+\hat{S}_1&=0,\hspace*{2cm}\label{v2.40}\\
      A^T\Delta_{\hat{Q}_{12}}^{\bar{A}_r}+\Delta_{\hat{Q}_{12}}^{\bar{A}_r}\bar{A}_r+\hat{S}_2&=0\label{v2.41},\\
      A^T\Delta_{\hat{Q}_{13}}^{\bar{A}_r}+\Delta_{\hat{Q}_{13}}^{\bar{A}_r}(\bar{A}_r-\bar{B}_rD^{-1}\bar{C}_r)+\hat{S}_3&=0,\label{v2.42}\\
      \bar{A}_r^T\Delta_{\hat{Q}_{22}}^{\bar{A}_r}+\Delta_{\hat{Q}_{22}}^{\bar{A}_r}\bar{A}_r+\hat{S}_4&=0,\label{v2.43}\\
      \bar{A}_r^T\Delta_{\hat{Q}_{23}}^{\bar{A}_r}+\Delta_{\hat{Q}_{23}}^{\bar{A}_r}(\bar{A}_r-\bar{B}_rD^{-1}\bar{C}_r)+\hat{S}_5&=0,\label{v2.44}\\
      (\bar{A}_r-\bar{B}_rD^{-1}\bar{C}_r)^T\Delta_{\hat{Q}_{33}}^{\bar{A}_r}+\Delta_{\hat{Q}_{33}}^{\bar{A}_r}(\bar{A}_r-\bar{B}_rD^{-1}\bar{C}_r)+\hat{S}_6&=0\label{v2.45}
      \end{align}
      wherein
      \begin{align}
      \hspace*{1.5cm}\hat{S}_1&=-C^TD^{-T}\bar{B}_r^T(\Delta_{\hat{Q}_{13}}^{\bar{A}_r})^T-\Delta_{\hat{Q}_{13}}^{\bar{A}_r}\bar{B}_rD^{-1}C,\nonumber\\
      \hat{S}_2&=-C^TD^{-T}\bar{B}_r^T\Delta_{\hat{Q}_{23}}^{\bar{A}_r}+\Delta_{\hat{Q}_{13}}^{\bar{A}_r}\bar{B}_rD^{-1}\bar{C}_r+\hat{Q}_{12}\Delta_{\bar{A}_r},\nonumber\\
      \hat{S}_3&=\hat{Q}_{13}\Delta_{\bar{A}_r}-C^TD^{-T}\bar{B}_r^T\Delta_{\hat{Q}_{33}}^{\bar{A}_r},\nonumber\\
      \hat{S}_4&=(\Delta_{\bar{A}_r})^T\hat{Q}_{22}+\hat{Q}_{22}\Delta_{\bar{A}_r}+\bar{C}_r^TD^{-T}\bar{B}_r^T\Delta_{\hat{Q}_{23}}^{\bar{A}_r}+\Delta_{\hat{Q}_{23}}^{\bar{A}_r}\bar{B}_rD^{-1}\bar{C}_r,\nonumber\\
      \hat{S}_5&=(\Delta_{\bar{A}_r})^T\hat{Q}_{23}+\hat{Q}_{23}\Delta_{\bar{A}_r}+\bar{C}_r^TD^{-T}\bar{B}_r^T\Delta_{\hat{Q}_{33}}^{\bar{A}_r},\nonumber\\
      \hat{S}_6&=(\Delta_{\bar{A}_r})^T\hat{Q}_{33}+\hat{Q}_{33}\Delta_{\bar{A}_r}.\nonumber
      \end{align}
      By differentiating $\hat{J}$ with respect to $\bar{A}_r$, we get
      \begin{align}
      \Delta_{\hat{J}}^{\bar{A}_r}&=trace(B^T\Delta_{\hat{Q}_{11}}^{\bar{A}_r}B+2B^T\Delta_{\hat{Q}_{12}}^{\bar{A}_r}\bar{B}_r+\bar{B}_r^T\Delta_{\hat{Q}_{22}}^{\bar{A_r}}\bar{B}_r)\nonumber\\
      &=trace(BB^T\Delta_{\hat{Q}_{11}}^{\bar{A}_r}+2B\bar{B}_r^T(\Delta_{\hat{Q}_{12}}^{\bar{A}_r})^T+\bar{B}_r\bar{B}_r^T\Delta_{\hat{Q}_{22}}^{\bar{A}_r})\nonumber
      \end{align}
      By applying Lemma \ref{lemma} on the equations (\ref{v2.40}) and (\ref{nnee6}), (\ref{v2.41}) and (\ref{nnee7}), and (\ref{v2.43}) and (\ref{nnee8}), we get
      \begin{align}
      trace(BB^T\Delta_{\hat{Q}_{11}}^{\bar{A}_r})&=trace(\hat{S}_1P_{11}),\nonumber\\
      trace\big(B\bar{B}_r^T(\Delta_{\hat{Q}_{12}}^{\bar{A}_r})^T\big)&=trace(\hat{S}_2^TP_{12}),\nonumber\\
      trace(\bar{B}_r\bar{B}_r^T\Delta_{\hat{Q}_{22}}^{\bar{A}_r})&=trace(\hat{S}_4\hat{P}_{22}).\nonumber
      \end{align}
      Thus
      \begin{align}
      \Delta_{\hat{J}}^{\bar{A}_r}&=trace\Big(-C^TD^{-T}\bar{B}_r^T(\Delta_{\hat{Q}_{13}}^{\bar{A}_r})^TP_{11}-(\Delta_{\hat{Q}_{13}}^{\bar{A}_r})^T\bar{B}_rD^{-1}CP_{11}\nonumber\\
      &\hspace*{2.5cm}-2\Delta_{\hat{Q}_{23}}^{\bar{A}_r}\bar{B}_rD^{-1}CP_{12}+2\bar{C}_r^TD^{-T}\bar{B}_r^T(\Delta_{\hat{Q}_{13}}^{\bar{A}_r})^TP_{12}\nonumber\\
      &\hspace*{2.5cm}+2(\Delta_{\bar{A}_r})^T\hat{Q}_{12}^TP_{12}+(\Delta_{\bar{A}_r})^T\hat{Q}_{22}\hat{P}_{22}+\hat{Q}_{22}\Delta_{\bar{A}_r}\hat{P}_{22}\nonumber\\
      &\hspace*{2.5cm}+\bar{C}_r^TD^{-T}\bar{B}_r^T\Delta_{\hat{Q}_{23}}^{\bar{A}_r}\hat{P}_{22}+\Delta_{\hat{Q}_{23}}^{\bar{A}_r}\bar{B}_rD^{-T}\bar{C}_r\hat{P}_{22}\Big)\nonumber\\
      &=2trace\Big(\hat{Q}_{12}^TP_{12}(\Delta_{\bar{A}_r})^T+\hat{Q}_{22}\hat{P}_{22}(\Delta_{\bar{A}_r})^T+P_{12}\bar{C}_r^TD^{-T}\bar{B}_r^T(\Delta_{\hat{Q}_{13}}^{\bar{A}_r})^T\nonumber\\
      &\hspace*{2.5cm}-P_{11}C^TD^{-T}\bar{B}_r^T(\Delta_{\hat{Q}_{13}}^{\bar{A}_r})^T-P_{12}^TC^TD^{-T}\bar{B}_r^T\Delta_{\hat{Q}_{23}}^{\bar{A}_r}\nonumber\\
      &\hspace*{2.5cm}+\hat{P}_{22}\bar{C}_r^TD^{-T}\bar{B}_r^T\Delta_{\hat{Q}_{23}}^{\bar{A}_r}\Big).\nonumber
      \end{align}
      By applying Lemma \ref{lemma} on the equations (\ref{v2.42}) and (\ref{v2.21}), and (\ref{v2.44}) and (\ref{v2.22}), we get
      \begin{align}
      trace(\hat{S}_3^T\hat{P}_{13})&=trace\big(-P_{11}C^TD^{-T}\bar{B}_r^T(\Delta_{\hat{Q}_{13}}^{\bar{A}_r})^T+P_{12}\bar{C}_r^TD^{-T}\bar{B}_r^T(\Delta_{\hat{Q}_{13}}^{\bar{A}_r})^T\big)\nonumber\\
      trace(\hat{S}_5^T\hat{P}_{23})&=trace\big(-P_{12}^TC^TD^{-T}\bar{B}_r^T\Delta_{\hat{Q}_{23}}^{\bar{A}_r}+\hat{P}_{22}\bar{C}_r^TD^{-T}\bar{B}_r^T\Delta_{\hat{Q}_{23}}^{\bar{A}_r}\big).\nonumber
      \end{align}
      Thus
      \begin{align}
      \Delta_{\hat{J}}^{\bar{A}_r}&=2trace\Big(\hat{Q}_{12}^TP_{12}(\Delta_{\bar{A}_r})^T+\hat{Q}_{22}\hat{P}_{22}(\Delta_{\bar{A}_r})^T+\hat{Q}_{13}^T\hat{P}_{13}(\Delta_{\bar{A}_r})^T\nonumber\\
      &\hspace*{2.5cm}+\hat{Q}_{23}\hat{P}_{23}^T(\Delta_{\bar{A}_r})^T+\hat{Q}_{23}\hat{P}_{23}(\Delta_{\bar{A}_r})^T\nonumber\\
      &\hspace*{2.5cm}-\bar{B}_rD^{-1}C\hat{P}_{13}\Delta_{\hat{Q}_{33}}^{\bar{A}_r}+\bar{B}_rD^{-1}\bar{C}_r\hat{P}_{23}\Delta_{\hat{Q}_{33}}^{\bar{A}_r}\Big).\nonumber
      \end{align}
      By applying Lemma \ref{lemma} on the equations (\ref{v2.45}) and (\ref{v2.23}), we get
      \begin{align}
      trace(\hat{S}_6\hat{P}_{33})=2trace(-\bar{B}_rD^{-1}C\hat{P}_{13}\Delta_{\hat{Q}_{33}}^{\hat{A}_r}+\bar{B}_rD^{-1}\bar{C}_r\hat{P}_{23}\Delta_{\hat{Q}_{33}}^{\bar{A}_r}).\nonumber
      \end{align}
      Thus
       \begin{align}
      \Delta_{\hat{J}}^{\bar{A}_r}&=2trace\Big(\hat{Q}_{12}^TP_{12}(\Delta_{\bar{A}_r})^T+\hat{Q}_{22}\hat{P}_{22}(\Delta_{\bar{A}_r})^T+\hat{Q}_{13}^T\hat{P}_{13}(\Delta_{\bar{A}_r})^T\nonumber\\
      &\hspace*{2.5cm}+\hat{Q}_{23}\hat{P}_{23}^T(\Delta_{\bar{A}_r})^T+\hat{Q}_{23}\hat{P}_{23}(\Delta_{\bar{A}_r})^T+\hat{Q}_{33}\hat{P}_{33}(\Delta_{\bar{A}_r})^T\Big)\nonumber\\
      &=2trace\Big(\big(\hat{Q}_{12}^TP_{12}+\hat{Q}_{22}P_{22}+\hat{Q}_{13}^T\hat{P}_{13}+\hat{Q}_{23}\hat{P}_{23}^T+\hat{Q}_{23}^T\hat{P}_{23}+\hat{Q}_{33}\hat{P}_{33}\big)(\Delta_{\bar{A}_r})^T\Big)\nonumber
      \end{align} and
      \begin{align}
      \frac{\partial}{\partial\bar{A}_r}||\Delta_{mul}(s)||_{\mathcal{H}_2}^2=2(\hat{Q}_{12}^TP_{12}+\hat{Q}_{22}\hat{P}_{22}+\hat{Q}_{13}^T\hat{P}_{13}+\hat{Q}_{23}\hat{P}_{23}^T+\hat{Q}_{23}\hat{P}_{23}+\hat{Q}_{33}\hat{P}_{33}).\nonumber
      \end{align}
      Hence,
      \begin{align}
      \hat{Q}_{12}^TP_{12}+\hat{Q}_{22}\hat{P}_{22}+\bar{X}=0\nonumber
      \end{align} is a necessary condition for the local optimum of $||\Delta_{mul}(s)||_{\mathcal{H}_2}^2$.

  Let us denote the first-order derivative of $\hat{Q}_{11}$, $\hat{Q}_{12}$, $\hat{Q}_{13}$, $\hat{Q}_{22}$, $\hat{Q}_{23}$, $\hat{Q}_{33}$, and $\hat{J}$ with respect to $\bar{B}_r$ as $\Delta_{\hat{Q}_{11}}^{\bar{B}_r}$, $\Delta_{\hat{Q}_{12}}^{\bar{B}_r}$, $\Delta_{\hat{Q}_{13}}^{\bar{B}_r}$, $\Delta_{\hat{Q}_{22}}^{\bar{B}_r}$, $\Delta_{\hat{Q}_{23}}^{\bar{B}_r}$, $\Delta_{\hat{Q}_{33}}^{\bar{B}_r}$, and $\Delta_{\hat{J}}^{\bar{B}_r}$, respectively. Further, let us denote the differential of $\bar{B}_r$ as $\Delta_{\bar{B}_r}$. By differentiating the equations (\ref{v2.25})-(\ref{v2.30}) with respect to $\bar{B}_r$, we get
      \begin{align}
      A^T\Delta_{\hat{Q}_{11}}^{\bar{B}_r}+\Delta_{\hat{Q}_{11}}^{\bar{B}_r}A+\hat{R}_1&=0,\hspace*{2cm}\label{v2.46}\\
      A^T\Delta_{\hat{Q}_{12}}^{\bar{B}_r}+\Delta_{\hat{Q}_{12}}^{\bar{B}_r}\bar{A}_r+\hat{R}_2&=0,\label{v2.47}\\
      A^T\Delta_{\hat{Q}_{13}}^{\bar{B}_r}+\Delta_{\hat{Q}_{13}}^{\bar{B}_r}(\bar{A}_r-\bar{B}_rD^{-1}\bar{C}_r)+\hat{R}_3&=0,\label{v2.48}\\
      \bar{A}_r^T\Delta_{\hat{Q}_{22}}^{\bar{B}_r}+\Delta_{\hat{Q}_{22}}^{\bar{B}_r}\bar{A}_r+\hat{R}_4&=0,\label{v2.49}\\
      \bar{A}_r^T\Delta_{\hat{Q}_{23}}^{\bar{B}_r}+\Delta_{\hat{Q}_{23}}^{\bar{B}_r}(\bar{A}_r-\bar{B}_rD^{-1}\bar{C}_r)+\hat{R}_5&=0,\label{v2.50}\\
      (\bar{A}_r-\bar{B}_rD^{-1}\bar{C}_r)^T\Delta_{\hat{Q}_{33}}^{\bar{B}_r}+\Delta_{\hat{Q}_{33}}^{\bar{B}_r}(\bar{A}_r-\bar{B}_rD^{-1}\bar{C}_r)+\hat{R}_6&=0\label{v2.51}
      \end{align}
      wherein
      \begin{align}
      \hat{R}_1&=-C^TD^{-T}(\Delta_{\bar{B}_r})^T\hat{Q}_{13}^T-\hat{Q}_{13}\Delta_{\bar{B}_r}D^{-1}C-C^TD^{-T}\bar{B}_r^T(\Delta_{\hat{Q}_{13}}^{\bar{B}_r})^T-\Delta_{\hat{Q}_{13}}^{\bar{B}_r}\bar{B}_rD^{-1}C,\nonumber\\
      \hat{R}_2&=-C^TD^{-T}(\Delta_{\bar{B}_r})^T\hat{Q}_{23}+\hat{Q}_{13}\Delta_{\bar{B}_r}D^{-1}\bar{C}_r-C^TD^{-T}\bar{B}_r^T\Delta_{\hat{Q}_{23}}^{\bar{B}_r}+\Delta_{\hat{Q}_{13}}^{\bar{B}_r}\bar{B}_rD^{-1}\bar{C}_r,\nonumber\\
      \hat{R}_3&=-\hat{Q}_{13}\Delta_{\bar{B}_r}D^{-1}\bar{C}_r-C^TD^{-T}(\Delta_{\bar{B}_r})^T\hat{Q}_{33}-C^TD^{-T}\bar{B}_r^T\Delta_{\hat{Q}_{33}}^{\bar{B}_r},\nonumber\\
      \hat{R}_4&=\bar{C}_r^TD^{-T}(\Delta_{\bar{B}_r})^T\hat{Q}_{23}+\bar{C}_r^TD^{-T}\bar{B}_r^T\Delta_{\hat{Q}_{23}}^{\bar{B}_r}+\Delta_{\hat{Q}_{23}}^{\bar{B}_r}\bar{B}_rD^{-1}\bar{C}_r+\hat{Q}_{23}\Delta_{\bar{B}_r}D^{-1}\bar{C}_r,\nonumber\\
      \hat{R}_5&=-\hat{Q}_{23}\Delta_{\bar{B}_r}D^{-1}\bar{C}_r+\bar{C}_r^TD^{-T}(\Delta_{\bar{B}_r})^T\hat{Q}_{33}+\bar{C}_r^TD^{-T}\bar{B}_r^T\Delta_{\hat{Q}_{33}}^{\bar{B}_r},\nonumber\\
      \hat{R}_6&=-\bar{C}_r^TD^{-T}(\Delta_{\bar{B}_r})^T\hat{Q}_{33}-\hat{Q}_{33}\Delta_{\bar{B}_r}D^{-1}\bar{C}_r.\nonumber
      \end{align}
       By differentiating $\hat{J}$ with respect to $\bar{B}_r$, we get
      \begin{align}
      \Delta_{\hat{J}}^{\bar{B}_r}&=trace(B^T\Delta_{\hat{Q}_{11}}^{\bar{B}_r}B+2B^T\hat{Q}_{12}\Delta_{\bar{B}_r}+2B^T\Delta_{\hat{Q}_{12}}^{\bar{B}_r}\bar{B}_r+2\bar{B}_r^T\hat{Q}_{22}\Delta_{\bar{B}_r}+\bar{B}_r^T\Delta_{\hat{Q}_{22}}^{\bar{B}_r}\bar{B}_r)\nonumber\\
      &=trace(BB^T\Delta_{\hat{Q}_{11}}^{\bar{B}_r}+2\hat{Q}_{12}^TB(\Delta_{\bar{B}_r})^T+2B\bar{B}_r^T(\Delta_{\hat{Q}_{12}}^{\bar{B}_r})^T\nonumber\\
      &\hspace*{6.1cm}+2\hat{Q}_{22}\bar{B}_r(\Delta_{\bar{B}_r})^T+\bar{B}_r\bar{B}_r^T\Delta_{\hat{Q}_{22}}^{\bar{B}_r})\nonumber
      \end{align}
      By applying Lemma (\ref{lemma}) on the equations (\ref{v2.46}) and (\ref{nnee6}), (\ref{v2.47}) and (\ref{nnee7}), and (\ref{v2.49}) and (\ref{nnee8}), we get
      \begin{align}
      trace(BB^T\Delta_{\hat{Q}_{11}}^{\bar{B}_r})&=trace(\hat{R}_1P_{11}),\nonumber\\
      trace(B\bar{B}_r^T(\Delta_{\hat{Q}_{12}}^{\bar{B}_r})^T)&=trace(\hat{R}_2^TP_{12}),\nonumber\\
      trace(\bar{B}_r\bar{B}_r^T\Delta_{\hat{Q}_{22}}^{\bar{B}_r})&=trace(\hat{R}_4\hat{P}_{22}).\nonumber
      \end{align}
      Thus
      \begin{align}
      \Delta_{\hat{J}}^{\bar{B}_r}&=trace\Big(2\hat{Q}_{12}^TB(\Delta_{\bar{B}_r})^T+2\hat{Q}_{22}\bar{B}_r(\Delta_{\bar{B}_r})^T-2\hat{Q}_{13}^TP_{11}C^TD^{-T}(\Delta_{\bar{B}_r})^T\nonumber\\
      &\hspace*{2cm}+2\hat{Q}_{13}^TP_{12}\bar{C}_r^TD^{-T}(\Delta_{\bar{B}_r})^T-2\hat{Q}_{23}P_{12}^TC^TD^{-T}(\Delta_{\bar{B}_r})^T\nonumber\\
      &\hspace*{2cm}+2\hat{Q}_{23}\hat{P}_{22}\bar{C}_r^TD^{-T}(\Delta_{\bar{B}_r})^T-2P_{11}C^TD^{-T}\bar{B}_r^T(\Delta_{\hat{Q}_{13}}^{\bar{B}_r})^T\nonumber\\
      &\hspace*{2cm}+2P_{12}\bar{C}_r^TD^{-T}\bar{B}_r^T(\Delta_{\hat{Q}_{13}}^{\bar{B}_r})^T-2P_{12}^TC^TD^{-T}\bar{B}_r^T(\Delta_{\hat{Q}_{23}}^{\bar{B}_r})^T\nonumber\\
      &\hspace*{2cm}+2\hat{P}_{22}\bar{C}_r^TD^{-T}\bar{B}_r^T\Delta_{\hat{Q}_{23}}^{\bar{B}_r}\Big).\nonumber
      \end{align}
      By applying Lemma (\ref{lemma}) on the equations (\ref{v2.48}) and (\ref{v2.21}), and (\ref{v2.50}) and (\ref{v2.22}), we get
      \begin{align}
      trace(-P_{11}C^TD^{-T}\bar{B}_r^T(\Delta_{\hat{Q}_{13}}^{\bar{B}_r})^T+P_{12}\bar{C}_r^TD^{-T}\bar{B}_r^T(\Delta_{\hat{Q}_{13}}^{\bar{B}_r})^T)=trace(\hat{R}_{3}^T\hat{P}_{13}),\nonumber\\
      trace(-P_{12}^TC^TD^{-T}\bar{B}_r^T\Delta_{\hat{Q}_{23}}^{\bar{B}_r}+\hat{P}_{22}\bar{C}_r^TD^{-T}\bar{B}_r^T\Delta_{\hat{Q}_{23}}^{\bar{B}_r})=trace(\hat{R}_{5}^T\hat{P}_{23}).\nonumber
      \end{align}
      Thus
      \begin{align}
      \Delta_{\hat{J}}^{\bar{B}_r}&=trace\Big(2\hat{Q}_{12}^TB(\Delta_{\bar{B}_r})^T+2\hat{Q}_{22}\bar{B}_r(\Delta_{\bar{B}_r})^T-2\hat{Q}_{13}^TP_{11}C^TD^{-T}(\Delta_{\bar{B}_r})^T\nonumber\\
      &\hspace*{2cm}+2\hat{Q}_{13}^TP_{12}\bar{C}_r^TD^{-T}(\Delta_{\bar{B}_r})^T-2\hat{Q}_{23}P_{12}^TC^TD^{-T}(\Delta_{\bar{B}_r})^T\nonumber\\
      &\hspace*{2cm}+2\hat{Q}_{23}\hat{P}_{22}\bar{C}_r^TD^{-T}(\Delta_{\bar{B}_r})^T-2\hat{Q}_{13}^T\hat{P}_{13}\bar{C}_r^TD^{-T}(\Delta_{\bar{B}_r})^T\nonumber\\
      &\hspace*{2cm}-2\hat{Q}_{33}\hat{P}_{13}^TC^TD^{-T}(\Delta_{\bar{B}_r})^T-2\hat{Q}_{23}\hat{P}_{23}\bar{C}_r^TD^{-T}(\Delta_{\bar{B}_r})^T\nonumber\\
      &\hspace*{2cm}+2\hat{Q}_{33}\hat{P}_{23}^T\bar{C}_r^TD^{-T}(\Delta_{\bar{B}_r})^T-2\bar{B}_rD^{-1}C\hat{P}_{13}\Delta_{\hat{Q}_{33}}^{\bar{B}_r}\nonumber\\
      &\hspace*{2cm}+2\bar{B}_rD^{-1}\bar{C}_r\hat{P}_{23}\Delta_{\hat{Q}_{33}}^{\bar{B}_r}\Big).\nonumber
      \end{align}
      By applying Lemma \ref{lemma} on the equations (\ref{v2.51}) and (\ref{v2.23}), we get
      \begin{align}
      trace(-2\bar{B}_rD^{-1}C\hat{P}_{13}\Delta_{\hat{Q}_{33}}^{\bar{B}_r}+2\bar{B}_rD^{-1}\bar{C}_r\hat{P}_{23}\Delta_{\hat{Q}_{33}}^{\bar{B}_r})=trace(\hat{R}_6\hat{P}_{33}).\nonumber
      \end{align}
      Thus
      \begin{align}
      \Delta_{\hat{J}}^{\bar{B}_r}&=trace\Big(2\hat{Q}_{12}^TB+2\hat{Q}_{22}\bar{B}_r+\big(-2\hat{Q}_{13}^TP_{11}C^T+2\hat{Q}_{13}^TP_{12}\bar{C}_r^T-2\hat{Q}_{23}P_{12}^TC^T\nonumber\\
      &\hspace*{2cm}+2\hat{Q}_{23}\hat{P}_{22}\bar{C}_r^T-2\hat{Q}_{13}^T\hat{P}_{13}\bar{C}_r^T-2\hat{Q}_{33}\hat{P}_{13}^TC^T-2\hat{Q}_{23}\hat{P}_{23}\bar{C}_r^T\nonumber\\
      &\hspace*{2cm}+2\hat{Q}_{33}\hat{P}_{23}^T\bar{C}_r^T-2\hat{Q}_{33}\hat{P}_{33}\bar{C}_r^T\big)D^{-T}(\Delta_{\bar{B}_r})^T\Big)\nonumber
      \end{align}
      and
      \begin{align}
      \frac{\partial}{\partial\bar{B}_r}||\Delta_{mul}(s)||_{\mathcal{H}_2}^2=2(\hat{Q}_{12}^TB+\hat{Q}_{22}\bar{B}_r+\bar{Y}).\nonumber
      \end{align}
      Hence,
      \begin{align}
      \hat{Q}_{12}^TB+\hat{Q}_{22}\bar{B}_r+\bar{Y}=0\nonumber
      \end{align} is a necessary condition of for the local optimum of $||\Delta_{mul}||_{\mathcal{H}_2}^2$.

      Let us denote the first-order derivative of $\hat{J}$, $\hat{P}_{13}$, $\hat{P}_{23}$, and $\hat{P}_{33}$ with respect to $\bar{C}_r$ as $\Delta_{\hat{J}}^{\bar{C}_r}$, $\Delta_{\hat{P}_{13}}^{\bar{C}_r}$, $\Delta_{\hat{P}_{23}}^{\bar{C}_r}$, and $\Delta_{\hat{P}_{33}}^{\bar{C}_r}$, respectively. Further, let us denote the differential of $\bar{C}_r$ as $\Delta_{\bar{C}_r}$. By taking differentiation of the equations (\ref{v2.21})-(\ref{v2.23}) with respect to $\bar{C}_r$, we get
          \begin{align}
          A\Delta_{\hat{P}_{13}}^{\bar{C}_r}+\Delta_{\hat{P}_{13}}^{\bar{C}_r}(\bar{A}_r-\bar{B}_rD^{-1}\bar{C}_r)^T+\hat{T}_1&=0,\label{v2.52}\\
          \bar{A}_r\Delta_{\hat{P}_{23}}^{\bar{C}_r}+\Delta_{\hat{P}_{23}}^{\bar{C}_r}(\bar{A}_r-\bar{B}_rD^{-1}\bar{C}_r)^T+\hat{T}_2&=0,\label{v2.53}\\
          (\bar{A}_r-\bar{B}_rD^{-1}\bar{C}_r)\Delta_{\hat{P}_{33}}^{\bar{C}_r}+\Delta_{\hat{P}_{33}}^{\bar{C}_r}(\bar{A}_r-\bar{B}_rD^{-1}\bar{C}_r)^T+\hat{T}_3&=0.\label{v2.54}
          \end{align}
          wherein
          \begin{align}
          \hat{T}_1&=-\hat{P}_{13}(\Delta_{\bar{C}_r})^TD^{-T}\bar{B}_r^T+P_{12}(\Delta_{\bar{C}_r})^TD^{-T}\bar{B}_r^T,\nonumber\\
          \hat{T}_2&=\hat{P}_{23}(\Delta_{\bar{C}_r})^TD^{-T}\bar{B}_r^T+\hat{P}_{22}(\Delta_{\bar{C}_r})^TD^{-T}\bar{B}_r^T,\nonumber\\
          \hat{T}_3&=-\bar{B}_rD^{-1}\Delta_{\bar{C}_r}\hat{P}_{33}-\hat{P}_{33}(\Delta_{\bar{C}_r})^TD^{-T}\bar{B}_r^T-\bar{B}_rD^{-1}C\Delta_{\hat{P}_{13}}^{\bar{C}_r}\nonumber\\
          &\hspace*{2cm}+\bar{B}_rD^{-1}\Delta_{\bar{C}_r}\hat{P}_{23}+\bar{B}_rD^{-1}\bar{C}_r\Delta_{\hat{P}_{23}}^{\bar{C}_r}-(\Delta_{\hat{P}_{13}}^{\bar{C}_r})^TC^TD^{-T}\bar{B}_r^T\nonumber\\
          &\hspace*{2cm}+\hat{P}_{23}^T(\Delta_{\bar{C}_r})^TD^{-T}\bar{B}_r^T+(\Delta_{\hat{P}_{23}}^{\bar{C}_r})^T\bar{C}_r^TD^{-T}\bar{B}_r^T.\nonumber
          \end{align}
          By taking differentiation of $\hat{J}$ with respect to $\bar{C}_r$, we get
          \begin{align}
      \Delta_{\hat{J}}^{\bar{C}_r}&=trace\Big(-2D^{-1}CP_{12}(\Delta_{\bar{C}_r})^TD^{-T}+2D^{-1}C\hat{P}_{13}(\Delta_{\bar{C}_r})^TD^{-T}\nonumber\\
      &\hspace*{2cm}+2D^{-1}\bar{C}_r\hat{P}_{22}(\Delta_{\bar{C}_r})^TD^{-T}-2D^{-1}\bar{C}_r\hat{P}_{23}(\Delta_{\bar{C}_r})^TD^{-T}\nonumber\\
      &\hspace*{2cm}-2D^{-1}\bar{C}_r\hat{P}_{23}^T(\Delta_{\bar{C}_r})^TD^{-T}+2D^{-1}\bar{C}_r\hat{P}_{33}(\Delta_{\bar{C}_r})^TD^{-T}\nonumber\\
      &\hspace*{2cm}+2D^{-1}C\Delta_{\hat{P}_{13}}^{\bar{C}_r}\bar{C}_r^TD^{-T}-2D^{-1}\bar{C}_r\Delta_{\hat{P}_{23}}^{\bar{C}_r}\bar{C}_r^TD^{-T}\nonumber\\
      &\hspace*{2cm}+D^{-1}\bar{C}_r\Delta_{\hat{P}_{33}}^{\bar{C}_r}\bar{C}_r^TD^{-T}\Big)\nonumber\\
      &=trace\Big(-2D^{-T}D^{-1}CP_{12}(\Delta_{\bar{C}_r})^T+2D^{-T}D^{-1}C\hat{P}_{13}(\Delta_{\bar{C}_r})^T\nonumber\\
      &\hspace*{2cm}+2D^{-T}D^{-1}\bar{C}_r\hat{P}_{22}(\Delta_{\bar{C}_r})^T-2D^{-T}D^{-1}\bar{C}_r\hat{P}_{23}(\Delta_{\bar{C}_r})^T\nonumber\\
      &\hspace*{2cm}-2D^{-T}D^{-1}\bar{C}_r\hat{P}_{23}^T(\Delta_{\bar{C}_r})^T+2D^{-T}D^{-1}\bar{C}_r\hat{P}_{33}(\Delta_{\bar{C}_r})^T\nonumber\\
      &\hspace*{2cm}+2C^TD^{-T}D^{-1}\bar{C}_r(\Delta_{\hat{P}_{13}}^{\bar{C}_r})^T-2\bar{C}_r^TD^{-T}D^{-1}\bar{C}_r(\Delta_{\hat{P}_{23}}^{\bar{C}_r})^T\nonumber\\
      &\hspace*{2cm}+\bar{C}_r^TD^{-T}D^{-1}\bar{C}_r\Delta_{\hat{P}_{33}}^{\bar{C}_r}\Big)\nonumber
      \end{align}
      By applying Lemma (\ref{lemma}) on the equations (\ref{v2.52}) and (\ref{v2.27}), (\ref{v2.53}) and (\ref{v2.29}), and (\ref{v2.54}) and (\ref{v2.30}), we get
      \begin{align}
      trace(\hat{T}_1^T\hat{Q}_{13})&=trace(-C^TD^{-T}\bar{B}_r^T\hat{Q}_{33}(\Delta_{\hat{P}_{13}}^{\bar{C}_r})^T+C^TD^{-T}D^{-1}\bar{C}_r(\Delta_{\hat{P}_{13}}^{\bar{C}_r})^T),\nonumber\\
      trace(\hat{T}_2^T\hat{Q}_{23})&=trace(\bar{C}_r^TD^{-T}\bar{B}_r^T\hat{Q}_{33}(\Delta_{\hat{P}_{23}}^{\bar{C}_r})^T-\bar{C}_r^TD^{-T}D^{-1}\bar{C}_r(\Delta_{\hat{P}_{23}}^{\bar{C}_r})^T),\nonumber\\
      trace(\hat{T}_3\hat{Q}_{33})&=trace(\bar{C}_r^TD^{-T}D^{-1}\bar{C}_r\Delta_{\hat{P}_{33}}^{\bar{C}_r}).\nonumber
      \end{align}
      Thus
      \begin{align}
      \Delta_{\hat{J}}^{\bar{C}_r}&=trace\Big(D^{-T}\big(-2D^{-1}CP_{12}+2D^{-1}C\hat{P}_{13}+2D^{-1}\bar{C}_r\hat{P}_{22}-2D^{-1}\bar{C}_r\hat{P}_{23}\nonumber\\
      &\hspace*{1.5cm}-2D^{-1}\bar{C}_r\hat{P}_{23}^T+2D^{-1}\bar{C}_r\hat{P}_{33}+2\bar{B}_r^T\hat{Q}_{13}^T\hat{P}_{13}+2\bar{B}_r^T\hat{Q}_{13}^TP_{12}\nonumber\\
      &\hspace*{1.5cm}+2\bar{B}_r^T\hat{Q}_{23}\hat{P}_{23}+2\bar{B}_r^T\hat{Q}_{23}P_{22}+2\bar{B}^T\hat{Q}_{33}\hat{P}_{33}+2\bar{B}_r^T\hat{Q}_{33}\hat{P}_{23}^T\big)(\Delta_{\bar{C}_r})^T\Big)\nonumber
      \end{align}
      Hence,
      \begin{align}
      \frac{\partial}{\partial\bar{C}_r}||\Delta_{mul}(s)||_{\mathcal{H}_2}^2=2D^{-T}(-D^{-1}CP_{12}+D^{-1}C\hat{P}_{13}+D^{-1}\bar{C}_r\hat{P}_{22}-D^{-1}\bar{C}_r\hat{P}_{23}\nonumber\\
      \hspace*{1.5cm}-D^{-1}\bar{C}_r\hat{P}_{23}^T+D^{-1}\bar{C}_r\hat{P}_{33}+\bar{B}_r^T\hat{Q}_{13}^T\hat{P}_{13}+\bar{B}_r^T\hat{Q}_{13}^TP_{12}\nonumber\\
      \hspace*{1.5cm}+\bar{B}_r^T\hat{Q}_{23}\hat{P}_{23}+\bar{B}_r^T\hat{Q}_{23}\hat{P}_{22}+\bar{B}_r^T\hat{Q}_{33}\hat{P}_{33}+\bar{B}_r^T\hat{Q}_{33}\hat{P}_{23}^T)\nonumber
      \end{align} and
    \begin{align}
     -(D^{T}D)^{-1}CP_{12}+(D^{T}D)^{-1}\bar{C}_r\hat{P}_{22}+\bar{Z}&=0\nonumber
    \end{align}is a necessary condition for the local optimum of $||\Delta_{mul}||_{\mathcal{H}_2}^2$.
\section*{Acknowledgment}
This work is supported by the National Natural Science Foundation of China under Grant No. 61873336, the International Corporation Project of Shanghai Science and Technology Commission under Grant 21190780300, in part by The National Key Research and Development Program No. 2020YFB1708200 , and in part by the High-end foreign expert programs No. QN2022013008L, G2021013008L, and G2022013050L granted by the State Administration of Foreign Experts Affairs (SAFEA).
%\bibliography{flbib}

\begin{thebibliography}{}
\expandafter\ifx\csname url\endcsname\relax
  \def\url#1{\texttt{#1}}\fi
\expandafter\ifx\csname urlprefix\endcsname\relax\def\urlprefix{URL }\fi
\expandafter\ifx\csname href\endcsname\relax
  \def\href#1#2{#2} \def\path#1{#1}\fi

\end{thebibliography}


\begin{thebibliography}{10}
\expandafter\ifx\csname url\endcsname\relax
  \def\url#1{\texttt{#1}}\fi
\expandafter\ifx\csname urlprefix\endcsname\relax\def\urlprefix{URL }\fi
\expandafter\ifx\csname href\endcsname\relax
  \def\href#1#2{#2} \def\path#1{#1}\fi

\bibitem{schilders2008model}
W.~H. Schilders, H.~A. Van~der Vorst, J.~Rommes, Model order reduction: theory,
  research aspects and applications, Vol.~13, Springer, 2008.

\bibitem{quarteroni2014reduced}
A.~Quarteroni, G.~Rozza, et~al., Reduced order methods for modeling and
  computational reduction, Vol.~9, Springer, 2014.

\bibitem{benner2011model}
P.~Benner, M.~Hinze, E.~J.~W. Ter~Maten, Model reduction for circuit
  simulation, Vol.~74, Springer, 2011.

\bibitem{benner2017model}
P.~Benner, M.~Ohlberger, A.~Cohen, K.~Willcox, Model reduction and
  approximation: theory and algorithms, SIAM, 2017.

\bibitem{chow2013power}
J.~H. Chow, Power system coherency and model reduction, Vol.~84, Springer,
  2013.

\bibitem{obinata2012model}
G.~Obinata, B.~D. Anderson, Model reduction for control system design, Springer
  Science \& Business Media, 2012.

\bibitem{zhou1996robust}
K.~Zhou, J.~Doyle, K.~Glover, Robust and optimal control, Control Engineering
  Practice 4~(8) (1996) 1189--1190.

\bibitem{wang1992multiplicative}
W.~Wang, M.~Safonov, Multiplicative-error bound for balanced stochastic
  truncation model reduction, IEEE Transactions on Automatic Control 37~(8)
  (1992) 1265--1267.

\bibitem{ennth}
D.~F. Enns, Model reduction for control system design, Ph.D. thesis, Standford
  University (1984).

\bibitem{vidyasagar1985control}
M.~Vidyasagar, Control systems synthesis: A factorization approach, Volume 8 of
  North Holland System and Control Series (1985).

\bibitem{moore1981principal}
B.~Moore, Principal component analysis in linear systems: Controllability,
  observability, and model reduction, IEEE Transactions on Automatic Control
  26~(1) (1981) 17--32.

\bibitem{enns1984model}
D.~F. Enns, Model reduction with balanced realizations: An error bound and a
  frequency weighted generalization, in: The 23rd IEEE Conference on Decision
  and Control, IEEE, 1984, pp. 127--132.

\bibitem{gugercin2003modified}
S.~Gugercin, D.~C. Sorensen, A.~C. Antoulas, A modified low-rank Smith method
  for large-scale Lyapunov equations, Numerical Algorithms 32~(1) (2003)
  27--55.

\bibitem{zhou1995frequency}
K.~Zhou, Frequency-weighted $\mathcal{L}_\infty$ norm and optimal Hankel norm
  model reduction, IEEE Transactions on Automatic Control 40~(10) (1995)
  1687--1699.

\bibitem{green1988balanced}
M.~Green, Balanced stochastic realizations, Linear Algebra and its Applications
  98 (1988) 211--247.

\bibitem{green1988relative}
M.~Green, A relative error bound for balanced stochastic truncation, IEEE
  Transactions on Automatic Control 33~(10) (1988) 961--965.

\bibitem{benner2001efficient}
P.~Benner, E.~S. Quintana-Ort{\'\i}, G.~Quintana-Ort{\'\i}, Efficient numerical
  algorithms for balanced stochastic truncation, International Journal of Applied Mathematics and Computer Science 11~(5) 2001 1123--1150.

\bibitem{kim1995multiplicative}
S.~W. Kim, B.~D. Anderson, A.~G. Madievski, Multiplicative approximation of
  transfer functions with frequency weighting, Systems \& Control Letters
  25~(3) (1995) 199--204.

\bibitem{shaker2008frequency}
H.~R. Shaker, Frequency-domain balanced stochastic truncation for continuous
  and discrete time systems, International Journal of Control, Automation, and
  Systems 6~(2) (2008) 180--185.

\bibitem{tahavori2013model}
M.~Tahavori, H.~R. Shaker, Model reduction via time-interval balanced
  stochastic truncation for linear time invariant systems, International
  Journal of Systems Science 44~(3) (2013) 493--501.

\bibitem{castagnotto2018optimal}
A.~Castagnotto, Optimal model reduction by tangential interpolation:
  $\mathcal{H}_2$ and $\mathcal{H}_\infty$ perspectives, Ph.D. thesis,
  Technische Universit{\"a}t M{\"u}nchen (2018).

\bibitem{xu2011optimal}
Y.~Xu, T.~Zeng, Optimal $\mathcal{H}_2$ model reduction for large scale MIMO
  systems via tangential interpolation., International Journal of Numerical
  Analysis \& Modeling 8~(1).

\bibitem{bunse2010h2}
A.~Bunse-Gerstner, D.~Kubali{\'n}ska, G.~Vossen, D.~Wilczek,
  $\mathcal{H}_2$-norm optimal model reduction for large scale discrete
  dynamical MIMO systems, Journal of Computational and Applied Mathematics
  233~(5) (2010) 1202--1216.

\bibitem{castagnotto2017interpolatory}
A.~Castagnotto, C.~Beattie, S.~Gugercin, Interpolatory methods for
  $\mathcal{H}_\infty$ model reduction of multi-input/multi-output systems, in:
  Model Reduction of Parametrized Systems, Springer, 2017, pp. 349--365.

\bibitem{gugercin2008h_2}
S.~Gugercin, A.~C. Antoulas, C.~Beattie, $\mathcal{H}_2$ model reduction for
  large-scale linear dynamical systems, SIAM journal on matrix analysis and
  applications 30~(2) (2008) 609--638.

\bibitem{van2008h2}
P.~Van~Dooren, K.~A. Gallivan, P.-A. Absil, $\mathcal{H}_2$-optimal model
  reduction of MIMO systems, Applied Mathematics Letters 21~(12) (2008)
  1267--1273.

\bibitem{benner2011sparse}
P.~Benner, M.~K\"{o}hler, J.~Saak, Sparse-dense sylvester equations in
  $\mathcal{H}_2$-model order reduction, MPI Magdeburg Preprints MPIMD/11-11, 2011.

\bibitem{beattie2009trust}
C.~A. Beattie, S.~Gugercin, A trust region method for optimal $\mathcal{H}_2$
  model reduction, in: Proceedings of the 48h IEEE Conference on Decision and
  Control (CDC) held jointly with 2009 28th Chinese Control Conference, IEEE,
  2009, pp. 5370--5375.

\bibitem{sato2015riemannian}
H.~Sato, K.~Sato, Riemannian trust-region methods for $\mathcal{H}_2$ optimal
  model reduction, in: 2015 54th IEEE Conference on Decision and Control (CDC),
  IEEE, 2015, pp. 4648--4655.

\bibitem{wolf2013}
T.~Wolf, H.~K. Panzer, B.~Lohmann, $\mathcal{H}_2$ pseudo-optimality in model
  order reduction by Krylov subspace methods, in: 2013 European control
  conference (ECC), IEEE, 2013, pp. 3427--3432.

\bibitem{ibrir2018projection}
S.~Ibrir, A projection-based algorithm for model-order reduction with
  $\mathcal{H}_2$ performance: A convex-optimization setting, Automatica 93
  (2018) 510--519.

\bibitem{zulfiqar2021frequency}
U.~Zulfiqar, V.~Sreeram, M.~I. Ahmad, X.~Du, Frequency-weighted
  $\mathcal{H}_2$-optimal model order reduction via oblique projection,
  International Journal of Systems Science (2021) 1--17.

\bibitem{zulfiqar2021}
U.~Zulfiqar, V.~Sreeram, M.~Ilyas~Ahmad, X.~Du, Frequency-weighted
  $\mathcal{H}_2$-pseudo-optimal model order reduction, IMA Journal of
  Mathematical Control and Information 38~(2) (2021) 622--653.

\bibitem{bstmr}
T.~M. Inc, Matlabr2022b bstmr balanced stochastic model truncation (bst) via
  schur method, \url{https://www.mathworks.com/help/robust/ref/lti.bstmr.html}
  [Accessed: (03 March 2023)] (2022).

\bibitem{petersson2013nonlinear}
D.~Petersson, A nonlinear optimization approach to $\mathcal{H}_2$-optimal
  modeling and control, Ph.D. thesis, Link{\"o}ping University Electronic Press
  (2013).

\bibitem{petersen2008matrix}
K.~B. Petersen, M.~S. Pedersen, et~al., The matrix cookbook, Technical
  University of Denmark 7~(15) (2008) 510.

\bibitem{gallivan2004sylvester}
K.~Gallivan, A.~Vandendorpe, P.~Van~Dooren, Sylvester equations and
  projection-based model reduction, Journal of Computational and Applied
  Mathematics 162~(1) (2004) 213--229.

\bibitem{rommes2006efficient}
J.~Rommes, N.~Martins, Efficient computation of multivariable transfer function
  dominant poles using subspace acceleration, IEEE Transactions on Power
  Systems 21~(4) (2006) 1471--1483.

\bibitem{martins2007computation}
N.~Martins, P.~C. Pellanda, J.~Rommes, Computation of transfer function
  dominant zeros with applications to oscillation damping control of large
  power systems, IEEE transactions on Power Systems 22~(4) (2007) 1657--1664.

\bibitem{mcfarlane1992loop}
D.~McFarlane, K.~Glover, A loop-shaping design procedure using
  $\mathcal{H}_\infty$ synthesis, IEEE transactions on automatic control 37~(6)
  (1992) 759--769.

\bibitem{chahlaoui2005benchmark}
Y.~Chahlaoui, P.~V. Dooren, Benchmark examples for model reduction of linear
  time-invariant dynamical systems, in: Dimension reduction of large-scale
  systems, Springer, 2005, pp. 379--392.

\bibitem{rommes2009computing}
J.~Rommes, N.~Martins, F.~D. Freitas, Computing rightmost eigenvalues for
  small-signal stability assessment of large-scale Power Systems, IEEE
  transactions on power systems 25~(2) (2009) 929--938.

\end{thebibliography}

\end{document}